\tikzset{
  arrowstyle/.style={
    double,
    ->,
    shorten <=3pt,
    shorten >=3pt}}
\title{Glueability of resource proof-structures: inverting the Taylor
  expansion (long version)} 
\titlerunning{Glueability of resource proof-structures: inverting the Taylor
	expansion}
\author{Giulio Guerrieri}{University of Bath, Department of Computer Science, Bath,
  UK 
}{g.guerrieri@bath.ac.uk}{0000-0002-0469-4279}{}
\author{Luc Pellissier}{Université de Paris, IRIF, CNRS, F-75013 Paris, France
}{pellissier@irif.fr}{}{}
\author{Lorenzo {Tortora de Falco}}{Università Roma Tre, Dipartimento di Matematica e Fisica, Rome, Italy
}{tortora@uniroma3.it}{}{}
\authorrunning{G. Guerrieri and L. Pellissier and L. Tortora de Falco}
\keywords{linear logic, Taylor expansion, proof-net, proof-structure}
\newcommand{\ie}{\textit{i.e.}\xspace}
\newcommand{\eg}{\textit{e.g.}\xspace}
\newcommand{\polyadic}{$\DiLL_0$\xspace}
\newcommand{\polyadicdaimon}{$\DiLL_0^\maltese$\xspace}
\newcommand{\filled}{filled\xspace}
\newcommand{\ReflexiveTransitive}{\circlearrowleft}
\newcommand{\TreeT}{\ensuremath{\mathcal{F}}}
\newcommand{\CatFont}[1]{\mathbf{#1}}
\newcommand{\Rel}{\ensuremath{\CatFont{Rel}}} 
\newcommand{\PolyPN}{\ensuremath{\CatFont{qDiLL_0^\maltese}}}
\newcommand{\PPolyPN}{\ensuremath{\mathfrak{P}\PolyPN}}
\newcommand{\Path}{\ensuremath{\CatFont{Path}}}
\newcommand{\qMELL}{\ensuremath{\CatFont{qMELL}^\maltese}}
\newcommand{\SystemFont}[1]{\mathsf{#1}}
\newcommand{\LL}{\ensuremath{\SystemFont{LL}}\xspace}
\newcommand{\MELL}{\ensuremath{\SystemFont{ME}\LL}\xspace}
\newcommand{\MELLdaimon}{\ensuremath{\SystemFont{ME}\LL^\maltese}\xspace}
\newcommand{\DiLL}{\ensuremath{\SystemFont{Di}\LL}\xspace}
\newcommand{\Ax}{\mathbin{\texttt{ax}}}
\newcommand{\Cut}{\mathbin{\texttt{cut}}}
\newcommand{\Ex}{\mathbin{\texttt{exc}}}
\newcommand{\One}{\mathbin{\mathbf{1}}}
\newcommand{\BoxR}{\mathbin{\mathsf{Box}}}
\newcommand{\Mix}{\mathbin{\texttt{mix}}}
\newcommand{\Maybe}[1]{#1}
\newcommand{\MMaybe}[1]{{}_{\wn}[#1]}
\newcommand{\emptynet}{\varepsilon}
\newcommand{\PowerSet}[1]{\mathfrak{P}(#1)}
\newcommand{\op}[1]{\ensuremath{{#1}^{\mathrm{op}}}}
\newcommand{\ax}{\texttt{ax}}
\newcommand{\cut}{\texttt{cut}}
\newcommand{\FlagType}{\ensuremath{\mathsf{c}}}
\newcommand{\VertType}{\ensuremath{\mathsf{\ell}}}
\newcommand{\BoxFunction}{\ensuremath{\mathsf{box}}}
\newcommand{\TaylorSym}{\ensuremath{\mathcal{T}}}
\newcommand{\Taylor}[1]{\ensuremath{\TaylorSym(#1)}}\newcommand{\ProtoTaylor}[1]{\ensuremath{\TaylorSym^{\mathsf{proto}}(#1)}}
\newcommand{\FatTaylor}[1]{\ensuremath{\TaylorSym^{\maltese}(#1)}}
\newcommand{\FatTaylorEta}[1]{\ensuremath{\TaylorSym^{\maltese}_{\eta}(#1)}}
\newcommand{\NatTransf}{\ensuremath{\mathfrak{T}^{\maltese}}}
\newcommand{\NatTransfEta}{\ensuremath{{\mathfrak{T}^{\maltese}_{\eta}}}}
\newcommand{\Projection}{\NatTransf}
\newcommand{\Nat}{\ensuremath{\mathbb{N}}}
\newcommand{\mathnodes}[4]{
  \matrix (m) [
  matrix of math nodes,
  row sep=#1em,
  column sep=#2em,
  minimum width=#3em,
  ampersand replacement = \&]
  {#4}
}
\newcommand{\To}{\Rightarrow}
\tikzset{
  mapsto/.style={
    |->,
    decoration={
      markings,
      mark=at position 1 with {\arrow[scale=1.7]{>}}
    },
    postaction={decorate},
    shorten >=0.4pt}
}
\newcommand\pnrewrite[1][\quad]{%
  \mathrel{%
    \begin{tikzpicture}[%
      baseline={(current bounding box.south)}
      ]
      \node[%
      ,inner sep=.44ex
      ,align=center
      ] (tmp) {$\scriptstyle #1$\vphantom{\quad}};
      \path[%
      ,draw,<-
      ,decorate,decoration={%
        ,snake
        ,amplitude=0.7pt
        ,segment length=1.2mm,pre length=3.5pt
      }
      ] 
      (tmp.south east) -- (tmp.south west);
    \end{tikzpicture}
  }
}
\newcommand{\smallproofnets}{0.75}
\newcommand{\beforepn}{-1cm}
\newcommand{\magicnumber}{\smallproofnets}
\DeclareMathOperator{\WHY}{\textup{?}}
\def\?{\setbox0=\hbox{$\WHY$}\raisebox{.2\ht0}{\clipbox{0pt .2\ht0 0pt
      -.1\ht0}{?}}}
\newcommand{\Contr}
{\textup{\ooalign{\raisebox{0ex}{$\?$}\cr\raisebox{-.5ex}{$\scriptscriptstyle
        c$}}}}
\newcommand{\Der}
{\textup{\ooalign{\raisebox{0ex}{$\?$}\cr\raisebox{-.5ex}{$\scriptscriptstyle
      d$}}}}
\newcommand{\Weak}
{\textup{\ooalign{\raisebox{0ex}{$\?$}\cr\raisebox{-.5ex}{$\scriptscriptstyle
      w$}}}}
\newcommand{\tikzsetnextfilename}[1]{}
\newcommand{\version}{1}
\newcommand{\SLV}[2]{\ifthenelse{\equal{\version}{0}}{#1}{#2}}
\begin{document}

\maketitle

\begin{abstract}
  A Multiplicative-Exponential Linear Logic (MELL) proof-structure can be expanded into a set of resource proof-structures: its Taylor expansion.
  We introduce a new criterion characterizing those sets of resource proof-structures that are part of the Taylor expansion of some MELL
  proof-structure, through a
  rewriting system acting both on resource and MELL proof-structures.
  
\end{abstract}

\section{Introduction}
\label{sec:intro}

\subparagraph{Resource \texorpdfstring{$\lambda$}{lambda}-calculus and the Taylor expansion}

Girard's linear logic (\LL, \cite{Girard:1987}) is a
refinement of intuitionistic and classical logic that isolates the infinitary
parts of reasoning 
in two (dual) modalities: the \emph{exponentials} $\oc$
and $\wn$.
They give a logical status to the operations of memory management such as
\emph{copying} and \emph{erasing}: a linear proof corresponds---via Curry--Howard
isomorphism---to a program that uses its argument \emph{linearly}, \ie~exactly
once, while an exponential proof corresponds to a program that can use its
argument at will.



The intuition that linear programs are analogous to linear functions
(as studied in linear algebra) while exponential programs mirror a more general
class of analytic functions 
got a technical incarnation in Ehrhard's work \cite{Ehrhard:2002,Ehrhard:2005} on $\LL$-based denotational semantics for the $\lambda$-calculus.
This investigation has been then internalized in the syntax, yielding the \emph{resource $\lambda$-calculus} \cite{Boudol,Ehrhard:2003,Ehrhard:2008}: 
there, copying and erasing are forbidden 
and replaced by the possibility to apply a 
function to a \emph{bag} of resource $\lambda$-terms which specifies how many times an argument can be linearly passed to the function, so as to represent only bounded computations.

The \emph{Taylor expansion} associates with an ordinary $\lambda$-term
a (generally infinite) set of resource $\lambda$-terms, recursively approximating the usual application: the Taylor expansion of the $\lambda$-term $MN$ is made of resource $\lambda$-terms of the form $t[u_1, \dots, u_n]$, where
$t$ is a resource $\lambda$-term in the Taylor expansions of $M$, and $[u_1,\dots,u_n]$ is a bag of arbitrarily finitely many (possibly $0$) resource $\lambda$-terms in the Taylor expansion of	 $N$.
Roughly, the idea is to decompose a program into a set of
purely ``
resource-sensitive programs'', all of them containing only bounded (although
possibly non-linear) calls to inputs. The notion of Taylor expansion has many
applications in the theory of the $\lambda$-calculus, \eg in the study of linear
head reduction \cite{EhrhardRegnier06}, normalization \cite{DBLP:conf/fossacs/PaganiTV16,DBLP:conf/csl/Vaux17}, Böhm trees \cite{DBLP:conf/csl/BoudesHP13,KerinecMP18},
$\lambda$-theories~\cite{ManzonettoR14}, intersection
types~\cite{MazzaPellissierVial18}. More generally, understanding the relation
between a program and its Taylor expansion renews the logical approach to the
quantitative analysis of computation started with the inception of \LL.

A natural question is the \emph{inverse Taylor expansion problem}: how to
characterize which sets of resource $\lambda$-terms are contained in the Taylor
expansion of a same 
$\lambda$-term? Ehrhard and Regnier \cite{Ehrhard:2008} defined a simple \emph{coherence} relation such that a finite set of resource $\lambda$-terms is included in the Taylor expansion of a $\lambda$-term if and only if the elements
of this set are pairwise coherent. Coherence is crucial in many structural
properties of the resource $\lambda$-calculus, such as in the proof that in the $\lambda$-calculus normalization and Taylor expansion commute \cite{EhrhardRegnier06,Ehrhard:2008}.

We aim to solve the inverse Taylor expansion problem in the more general context of \LL, more precisely in the \emph{multiplicative-exponential fragment} $\MELL$ of \LL, being aware that for $\MELL$ 
no coherence relation can solve the problem (see below).


\subparagraph{Proof-nets, proof-structures and their Taylor expansion: seeing trees behind graphs}

In 
$\MELL$, linearity and the sharp analysis of computations naturally lead to represent proofs in a more general  \emph{graph}-like syntax instead of a term-like or tree-like one.\footnotemark
\footnotetext{A term-like object is essentially a tree, with one output (its root) and many inputs (its other leaves).} 
Indeed,  linear negation is involutive and classical duality can be interpreted as the possibility of juggling between different conclusions, without a distinguished output.
Graphs representing proofs in \MELL are called \emph{proof-nets}: their syntax is richer and more expressive than the $\lambda$-calculus. 
Contrary to $\lambda$-terms, proof-nets are special inhabitants of the wider land of \emph{proof-structures}: they can
be characterized, among proof-structures, by abstract (geometric) conditions
called correctness criteria \cite{Girard:1987}. The procedure of cut-elimination
can be applied to proof-structures, and proof-nets can also be seen as the
proof-structures with a good behavior with respect to cut-elimination
\cite{Bechet98}. Proof-structures can be interpreted in denotational models 
and proof-nets can be characterized among them by semantic means \cite{Retore97}.
It is then 
natural to attack problems in the general framework of proof-structures. 
In this work, correctness plays no role at all, hence we will
consider proof-structures and not only proof-nets. 
\MELL proof-structures are a
particular kind of graphs, whose edges are labeled by $\MELL$ formul\ae{} and
vertices by $\MELL$ connectives, and for which special subgraphs are
highlighted, the \emph{boxes}, representing the parts of the proof-structure
that can be copied and discarded (\ie called an unbounded number of times). 
A box is 
delimited from the rest of a proof-structure by exponential modalities: 
its border is made of one
$\oc$-cell, its principal door, and arbitrarily many $\wn$-cells, its auxiliary
doors.
Boxes are nested or disjoint (they cannot partially overlap), 
so as to add a tree-like structure to \mbox{proof-structures \emph{aside} from their graph-like nature.}

As in $\lambda$-calculus, 
one can define \cite{DBLP:journals/tcs/EhrhardR06} 
box-free \emph{resource proof-structures}\footnotemark,
\footnotetext{Also
  known as differential proof-structures \cite{Carvalho16} or differential nets \cite{DBLP:journals/tcs/EhrhardR06,DBLP:conf/lpar/MazzaP07,DBLP:journals/lmcs/Carvalho18} or simple nets \cite{PaganiTasson2009}.} where 
$\oc$-cells 
make resources available boundedly,
and the \emph{Taylor expansion} of $\MELL$ proof-structures into these resource proof-structures, that recursively copies the
content of the boxes an arbitrary number of times.
In fact, as somehow anticipated by Boudes \cite{Boudes:2009},
such a Taylor expansion operation can be carried on any tree-like
structure.
This primitive, abstract, notion of Taylor expansion can then be pulled back to
the structure of interest, \mbox{as 
	shown in \cite{Wollic} and put forth again here.}


\subparagraph*{The question of coherence for proof-structures}

The inverse Taylor expansion problem has a natural counterpart in the world of
\MELL proof-structures: given a set of resource proof-structures, is there a
$\MELL$ proof-structure the expansion of which contains the set? Pagani and
Tasson \cite{PaganiTasson2009} give the following answer: it is possible to
decide whether a finite set of resource proof-structures is a subset of the
Taylor expansion of a same \MELL proof-structure (and even possible to do it
in non-deterministic polynomial time); but unlike the $\lambda$-calculus, the structure of the relation ``being
part of the Taylor expansion of a same proof-structure'' is \emph{much more}
complicated than a binary (or even $n$-ary) coherence. Indeed, for any $n>1$, it
is possible to find ${n+1}$ resource proof-structures such that any $n$ of them
are in the Taylor expansion of some $\MELL$ proof-structure, but there is no
$\MELL$ proof-structure whose Taylor expansion has all the $n\!+\!1$ as elements
(see our \Cref{ex:not-coherent} and \cite[pp.~244-246]{Tasson:2009}).

In this work, we introduce a new combinatorial criterion, \emph{glueability},
for deciding whether a set of resource proof-structures is a subset of the
Taylor expansion of 
some $\MELL$ proof structure, based on a rewriting
system on sequences of $\MELL$ formul\ae{}. Our criterion is more general (and,
we believe, simpler) than the one of \cite{PaganiTasson2009}, which is limited
to the \emph{cut-free} case with \emph{atomic axioms} and characterizes only \emph{finite} sets: we do not have these limitations. 
We believe that our criterion is a useful tool for studying proof-structures.
We conjecture that it can be used to show that, for a suitable geometric
restriction, a binary coherence relation does exist for resource proof-structures. It
might also shed light on correctness~and~sequentialization.

As the proof-structures we consider are typed, an unrelated difficulty arises: a
resource proof-structure might not be in the Taylor expansion of any \(\MELL\)
proof-structure, not because it does not respect the structure imposed by the
Taylor expansion, but because its type is impossible.\footnotemark
\footnotetext{
Similarly, in the $\lambda$-calculus, there is no closed $\lambda$-term of type \(X \to Y\) with \(X \neq Y\) atomic, but the resource $\lambda$-term \((\lambda f.f)[\,]\) can be given that type: the empty bag $[\,]$ kills any information on the argument.} 
To solve this issue we enrich the \(\MELL\) proof-structure 
syntax with a ``universal'' proof-structure: 
a special \(\maltese\)-cell (\emph{daimon}) that can have any number of outputs of any types, and we allow it to appear inside a box, representing information plainly missing (see \Cref{sec:conclusions} for more details and the way this matter is handled by Pagani and Tasson~\cite{PaganiTasson2009}).

\section{Outline and technical issues}

\subparagraph*{The rewritings}
The essence of our rewriting system is not located on proof-structures but on
lists of \MELL formul\ae{} (\Cref{def:unwinding-paths}). In a very down-to-earth way, this rewriting system is
generated by elementary steps akin to rules of sequent calculus read from the
\emph{bottom up}: they act on a list of conclusions, analogous to a monolaterous
right-handed sequent. These steps are actually more sequentialized than sequent
calculus rules, as they 
do not allow for commutation. For instance, the
rule corresponding to the introduction of a $\otimes$ on the $i$-th formula, is
defined as
$\otimes_i : (\gamma_1,\dots,\gamma_{i-1},A\otimes B, \gamma_{i+1}, \dots,
\gamma_n) \to (\gamma_1,\dots,\gamma_{i-1},A, B, \gamma_{i+1}, \dots,
\gamma_n)$.

\begin{wrapfigure}[4]{r}{3cm}
  \vspace{-3.5\baselineskip}
  \scalebox{0.8}{
    \centering
    $
    \vspace{\beforepn}
    \tikzsetnextfilename{images/mell-tensor-intro-1}
    \pnet{
      \pnformulae{
        \pnf[A]{$A$}~\pnf[Ab]{$A^{\bot}$}
      }
      \pnaxiom[ax]{A,Ab}
      \pntensor{A,Ab}[i]{$A \otimes A^{\bot}$}
    }
    \pnrewrite[\otimes_1]
    \tikzsetnextfilename{images/mell-tensor-intro-2}
    \pnet{
      \pnformulae{
        \pnf[A]{$A$}~\pnf[Ab]{$A^{\bot}$}
      }
      \pnaxiom[ax]{A,Ab}
    }$}
\end{wrapfigure}
These rewrite steps then act on $\MELL$ proof-structures, coherently with
their type, by modifying (most of the times, erasing) the cells directly connected to the conclusion of the
proof-structure. Formally, this means that there is a functor $\qMELL$ from the
rewrite steps into the category $\Rel$ of sets and relations, associating with
a list of formul\ae{} the set of \MELL proof-structures with these conclusions, and with
a rewrite step a relation implementing it (\Cref{def:FunctorqMELL}). The rules
\emph{deconstruct} the proof-structure, starting from its conclusions. The rule
$\otimes_1$ acts by removing a $\otimes$-cell on the first conclusion, replacing
it by two conclusions.

These rules can only act on specific proof-structures, and indeed, capture a lot
of their structure: $\otimes_i$ can be applied to a \MELL proof-structure $R$ if and
only if $R$ has a $\otimes$-cell in the conclusion $i$ (as opposed to, say, an
axiom). So, in particular, every proof-structure is completely characterized by
any sequence rewriting it to the empty proof-structure.

\subparagraph{Naturality}
The same rules act also on sets of resource proof-structures, defining the
functor $\PPolyPN$ from the rewrite steps into the category $\Rel$
(\Cref{def:FunctorPPoly}). When carefully defined, the Taylor expansion induces
a \emph{natural transformation} from $\PPolyPN$ to $\qMELL$
(\Cref{thm:projection-natural}). By applying this naturality repeatedly, we get
our characterization (\Cref{thm:characterization}): a set of resource
proof-structures $\Pi$ is a subset of the Taylor expansion of a $\MELL$
proof-structure iff there is a sequence rewriting $\Pi$ to the singleton of
the~\emph{empty}~proof-structure.

The naturality property is not only a mean to get our characterization, but also an interesting result in itself: natural transformations can often be used to express fundamental properties in a mathematical context. 
In this case, the \emph{Taylor expansion is natural} with respect to the
possibility to build a proof-structure (both \(\MELL\) or resource) by adding a
cell to its conclusions or boxing it.
Said differently, naturality of the Taylor expansion roughly means that the rewrite rules that deconstruct a \MELL proof-structure $R$ and a set of resource proof-structures in the Taylor expansion of $R$ mimic each other.


\subparagraph*{Quasi-proof-structures and mix}
Our rewrite rules consume proof-structures from their conclusions. The rule
corresponding to boxes in $\MELL$ opens a box by deleting its principal door (a
$\oc$-cell) and its border, while for a resource proof-structure  
it deletes a $\oc$-cell and separates the different copies of the content of the box (possibly) represented by such a $\oc$-cell. 
This operation is problematic in a twofold way.
In a resource proof-structure, where the border of boxes is not marked, it is not clear how to identify such copies.
	On the other side, in a $\MELL$ proof-structure the content of a box is not to be treated as if it were at the same level as what is outside of the box: 
it can be copied many times or erased, while what is outside boxes cannot,
and treating the content in the same way as the outside suppresses this
distinction, which is crucial in \LL.
So, we need to remember that the content of a box, even if it is at depth $0$ (\ie not contained in any other box) after erasing the box wrapping it by means of our rewrite rules, 
is not to be mixed with the rest of the structure 
at depth $0$. 

\begin{wrapfigure}[3]{r}{1.75cm}
	\vspace{-\baselineskip}
	\scalebox{0.8}{
		\centering
		$
		\vspace{\beforepn}
		\tikzsetnextfilename{images/root}
		\pnet{
			\pnsomenet[u]{$\pi$}{1.25cm}{0.5cm}
			\pnoutfrom{u.-140}[pi1c]{$\qquad \ \!\!\cdots$}
			\pnoutfrom{u.-40}[u2a2]{$\quad$}
			\pnsemibox{u}
		}$}
\end{wrapfigure}
In order for our 
proof-structures to 
provide this information,
we need to generalize them and consider that a proof-structure can have not
just a tree of boxes, but a \emph{forest}: this yields the notion of
\emph{quasi-proof-structure} (\Cref{def:proof-structure}).
In this way, according to our rewrite rules, opening a box by deleting its principal door amounts to taking a
box in the tree and disconnecting it from its root, creating a new tree. We draw
this by surrounding elements having the same root with a dashed line, 
open from
the bottom, remembering the phantom presence of the border of the box, below,
even if it was erased. This allows one to open the box only when it is ``alone''
(see \Cref{def:unwinding}).

This is not merely a technical remark, as this generalization gives a status to
the $\Mix$ rule of \LL: indeed, mixing two proofs amounts to taking two proofs
and considering them as one, without any other modifications. Here, it amounts to
taking two proofs, each with its box-tree, and considering them as one by
merging the roots of their trees (see the mix step in \Cref{def:unwinding}). We
embed this design decision up to the level of formul\ae{}, which are segregated
in different zones that have to be mixed before interacting (see the notion of
partition of a finite sequence of formul\ae in \Cref{subparagraph:Formulas}).

\subparagraph{Geometric invariance and emptiness: the filled Taylor expansion}
\label{subsec:intro-fattened}
The use of forests instead of trees for the nesting structure of boxes, where
the different roots are thought of as the contents of long-gone boxes, has an
interesting consequence in the Taylor expansion: indeed, an element of the
Taylor expansion of a proof-structure contains an arbitrary number of copies of
the contents of the boxes, in particular \emph{zero}. If we think of the part at
depth $0$ of a $\MELL$ proof-structure as inside an invisible box, its content
can be deleted in some elements of the Taylor expansion just as any other
box.\footnotemark
\footnotetext{The dual case, of copying the contents of a box, poses no problem
  in our approach. 
}
As erasing completely conclusions would
cause the Taylor expansion not preserve the conclusions (which would lead to
technical complications), we introduce the \emph{filled Taylor expansion}
(\Cref{def:FilledTaylor}), which contains not only the elements of the usual
Taylor expansion, but also elements of the Taylor expansion where one component
has been erased and replaced by a \(\maltese\)-cell (\emph{daimon}), representing a lack of information, apart from the number and types of the conclusions.

\subparagraph{Atomic axioms}
Our paper first focuses on the case where proof-structures are restricted to \emph{atomic axioms}.
In \Cref{sec:non-atomic} we sketch how to adapt our method to the
non-atomic case.


\section{Proof-structures and the Taylor expansion}
\label{sec:proof-nets}

\subparagraph{\texorpdfstring{$\MELL$}{MELL} formul\ae{} and (quasi-)proof-structures }
\label{subparagraph:Formulas}

Given a countably infinite set of propositional variables $X, Y, Z, \dots$, 
\emph{\MELL formul\ae{}} are defined by the following inductive grammar:
\begin{equation*}
  \label{eq:formula}
  A,B \Coloneqq X \mid X^\bot \mid \One \mid \bot \mid A
  \otimes B \mid A \parr B \mid \oc A \mid \wn A
\end{equation*}

Linear negation is defined via De Morgan laws $\One^\bot = \bot$,
$(A \otimes B)^\bot = A^\bot \parr B^\bot$ and $(\oc A)^\bot = \wn A$, so as to
be involutive, \ie $A^{\bot\bot} = A$.
Given a list $\Gamma = (A_1, \dots, A_m)$ of \MELL formul\ae, a \emph{partition}
of $\Gamma$ is a list $(\Gamma_1, \dots, \Gamma_n)$ of lists of \MELL formul\ae
such that there are $0 = i_0 < \dots < i_n = m$ with
\mbox{$\Gamma_j = (A_{i_{j-1}+1}, \dots, A_{i_{j}})$} for all
$1 \leqslant j \leqslant n$; such a partition of $\Gamma$ is also denoted by
$(A_1, \dots, A_{i_1}; \cdots ; A_{i_{n-1}+1}, \dots, A_m)$, 
with lists separated by semi-colons. 


\begin{figure}[!t]
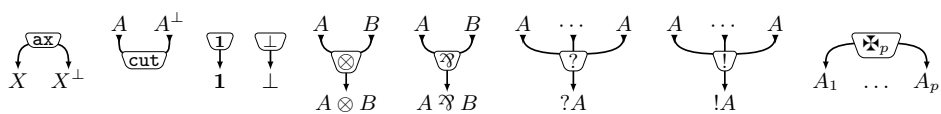

  \vspace*{\beforepn}
  \scalebox{0.9}{
    \centering 
    \tikzsetnextfilename{images/resource-cells}
    \pnet{
      \pnformulae{
        ~~
        \pnf[Xcut]{$A$}~\pnf[Xncut]{$A^{\bot}$}~~~
        \pnf[Ao]{$A$}~\pnf[Bo]{$B$}~
        \pnf[Ap]{$A$}~\pnf[Bp]{$B$}~
        \pnf[A1]{$A$}~\pnf[l]{$\cdots$}~\pnf[An]{$A$}~
        \pnf[A1e]{$A$}~\pnf[le]{$\cdots$}~\pnf[Ane]{$A$}
        \\
        \pnf[Xax]{$X$}~\pnf[Xnax]{$X^{\bot}$}~~~
        \pnf[1]{$\mathbf{1}$}~\pnf[bot]{$\bot$}~~~~~~~~~~~
        \pnf[i]{$A_1$}~\pnf[d]{$\ldots$}~\pnf[j]{$A_p$}
      }
      \pnaxiom[ax]{Xax,Xnax} 
      \pncut[cut]{Xcut,Xncut}
      \pnone{1}
      \pnbot{bot}
      \pntensor{Ao,Bo}[AoB]{$A\otimes B$}
      \pnpar{Ap,Bp}[ApB]{$A \parr B$}
      \pnexp{}{A1,l,An}[]{$\wn A$}
      \pnbag{}{A1e,le,Ane}[]{$\oc A$}
      \pndaimonp{i,j}
    }
  }
	\vspace*{-\baselineskip}
  \caption{Cells, with their labels and their typed inputs and outputs (ordered from  left to right).}
  \label{fig:resource-cells}
\end{figure}


We reuse the syntax of proof-structures given in \cite{Wollic} and sketch here
its main features. We suppose known definitions of (directed) graph, rooted tree, and
morphism of these structures. In what follows we will speak of \emph{tails} in
a graph: ``hanging'' edges with only one vertex. This can be implemented either
by adding special 
vertices or using \cite{Borisov2008}'s graphs.

If an edge $e$ is incoming in (resp.~outgoing from) a vertex $v$, we say that
$e$ is a \emph{input} (resp.~\emph{output}) of $v$. The reflexive-transitive
closure of a tree $\tau$ is denoted by $\tau^{\ReflexiveTransitive}$: the
operator $(\cdot)^{\ReflexiveTransitive}$ lifts to a functor from the category
of trees to the category of directed graphs.

\begin{definition}
\label{def:proof-structure}
  A \emph{module} $M$ is a (finite) directed graph with:
  \begin{itemize}
  \item vertices $v$ labeled by
    $\VertType(v) \in \{ \ax, \cut, \One, \bot, \otimes, \parr, \wn, \oc\} \cup
    \{\maltese_p \mid p \in \Nat\}$, the \emph{type} of $v$; 
  \item edges $e$ labeled by a $\MELL$ formula $\FlagType(e)$, the \emph{type} of $e$; 
  \item an order $<_M$ 
that is total on the tails of $|M|$ 
and on the inputs of
each vertex of~type~$\parr, \otimes$. 
  \end{itemize}
  Moreover, all the vertices verify the conditions of
  \Cref{fig:resource-cells}.\footnotemark
  \footnotetext{Note that there are no conditions on the types of the outputs of 
    vertices of type $\maltese$ (\ie~of type $\maltese_p$ for some $p \in \Nat$); and the outputs of vertices of type $\ax$ must have \emph{atomic} types.}

  A \emph{quasi-proof-structure} is a triple $R = (|R|,\TreeT,\BoxFunction)$ where:
  \begin{itemize}
  \item $|R|$ is a module with no input tails, called the \emph{module} of $R$;
  \item $\TreeT$ is a forest of rooted trees with no input tails, called the
    \emph{box-forest} of $R$;
  \item $\BoxFunction \colon |R| \to \TreeT^{\ReflexiveTransitive}$ is a morphism of directed graphs, the \emph{box-function} of $R$, which induces a partial bijection from the inputs of the vertices
    of type $\oc$
    and the 
    edges in $\TreeT$,
    and such~that:
  \begin{itemize}
		\item for any vertices $v, v'$ with an edge from $v'$ to $v$, 
		if $\BoxFunction(v) \neq \BoxFunction(v')$ then $\VertType(v) \in \{ \oc, \wn \}$.\footnotemark
		\footnotetext{Roughly, it says that the border of a box is made of (inputs of) vertices of type $\oc$ or $\wn$.}
\end{itemize}
  	Moreover, for any output tails $e_1, e_2, e_3$ in $|R|$ which are outputs of the vertices $v_1, v_2, v_3$, respectively, if $e_1 <_{|R|} e_2 <_{|R|} e_3$ then it is impossible that $\BoxFunction(v_1) = \BoxFunction(v_3) \neq \BoxFunction(v_2)$.\footnotemark
  	\footnotetext{This is a technical condition that simplifies the definition of the rewrite rules in \Cref{sec:rules}. 
  		Note that $\BoxFunction(v_1), \BoxFunction(v_2), \BoxFunction(v_3)$ are necessarily roots in $\TreeT$, since $\BoxFunction$ is a morphism of directed graphs.}
%
  \end{itemize}

  A quasi-proof-structure $R = (|R|, \TreeT, \BoxFunction)$ is:
 	\begin{enumerate}
 		\item $\MELLdaimon$ if
all vertices in $|R|$ of type $\oc$ have exactly one input, and
	the partial bijection induced by $\BoxFunction$ from the inputs of the vertices
	    of type $\oc$ in $|R|$ 
	    and the 
	    edges in $\TreeT$ is 
	    total.

		\item $\MELL$ if it is $\MELLdaimon$ and, for every vertex $v$ in $|R|$ of type $\maltese$, one has $\BoxFunction^{{-}1}(\BoxFunction(v)) = \{v\}$ and $\BoxFunction(v)$ is not a root of the box-forest $\TreeT$ of $R$.
	
		\item $\DiLL_0^\maltese$ if the box-forest $\TreeT$ of $R$ is just a juxtaposition of roots.
		
		\item $\DiLL_0$ (or \emph{resource}) if it is $\DiLL_0^\maltese$ and there is no vertex in $|R|$ of type $\maltese$.
	\end{enumerate}
	For the previous systems, a \emph{proof-structure} is a quasi-proof-structure whose box-forest~is~a~tree.
\end{definition}

Our $\MELL$ proof-structure (\ie~a $\MELL$ quasi-proof-structure that is also a proof-structure) corresponds to the usual notion of $\MELL$ proof-structure (as in \cite{CarvalhoTortora}) except that we also allow the presence of a box filled only by a \emph{daimon} (\ie a vertex of type $\maltese$).
The \emph{empty} ($\DiLL_0$ and \MELL) proof-structure---whose module and box-forest are empty graphs---is denoted~by~$\emptynet$.

Given a quasi-proof-structure $R = (|R|, \TreeT, \BoxFunction)$, the output tails of $|R|$ are the \emph{conclusions} of $R$. 
So, the pre-images of the roots of $\TreeT$ via $\BoxFunction$ 
partition the conclusions of $R$ in a list of lists of such conclusions.
The \emph{type} of $R$ is the list of lists of the types of these
conclusions.
We often identify the conclusions of $R$ with a finite initial segment of $\Nat$.

By definition of graph morphism, two conclusions in two
distinct lists in the type of a quasi-proof-structure $R$ are in two distinct
connected components of $|R|$; so, if $R$ is not a proof-structure then $|R|$
contains several connected components. 
Thus, $R$ 
can be seen as a list of proof-structures, its
\emph{components}, one for each root in its box-forest.

A non-root vertex $v$ in the box-forest $\TreeT$ induces a subgraph of
$\TreeT^{\ReflexiveTransitive}$ of all vertices above it and edges connecting
them. The pre-image of this subgraph through $\BoxFunction$ is the
\emph{box} of $v$ and the conditions on $\BoxFunction$ in \Cref{def:proof-structure} translate 
the usual nesting condition for \LL boxes.


In 
quasi-proof-structures, we speak of \emph{cells} instead of
vertices, and, for a cell of type $\ell$, of a \emph{$\ell$-cell}.
A $\maltese$-cell is a $\maltese_{p}$-cell for some $p \in \Nat$. 
An \emph{hypothesis cell} is a cell without inputs.



\begin{example}
  The graph in \Cref{fig:pointed-proof-net} is a \MELL quasi-proof-structure.
  The colored areas represent the pre-images of boxes, and the dashed boxes
  represent the pre-images of roots.
\end{example}

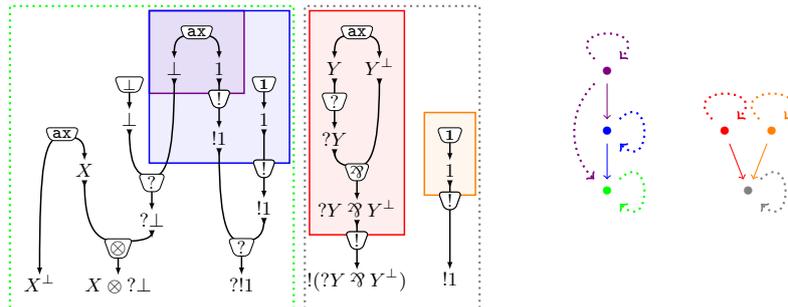
\begin{figure}[!th]
  \vspace{\beforepn}
  \centering
  \tikzsetnextfilename{images/church-3}
  \scalebox{0.8}{ 
    \pnet{
      \pnformulae{
        ~~~\pnf[botax]{$\bot$}~\pnf[1ax]{$1$}~~~\pnf[Y]{$Y$}~\pnf[Yp]{$Y^\bot$} \\
        ~~\pnf[bot]{$\bot$}~~~\pnf[1]{$1$} \\
        ~\pnf[X]{$X$}~~~~~~~~~\pnf[1']{$1$} \\ \\ \\
        \pnf[Xp]{$X^\bot$}
      }
      \pnaxiom[ax]{botax,1ax}
      \pnaxiom[axprop]{Xp,X}
      \pnaxiom[axprop']{Y,Yp}
      \pnboxc{ax,botax,1ax}{violet}
      \pnprom{1ax}[!1ax]{$\oc 1$}
      \pnbot[botcell]{bot}
      \pnone[1cell]{1}
      \pnboxc{1,1cell,ax,botax,1ax,!1ax}{blue}
      \pnprom{1}[!1]{$\oc 1$}
      \pnexp{}{bot,botax}[?bot]{$\wn \bot$}[1.5]
      \pnexp{}{!1ax,!1}[?!1]{$\wn\oc 1$}[1.1]
      \pntensor{X,?bot}[tens]{$X \otimes \wn \bot$}[.9]
      \pnexp{}{Y}[?Y]{$\wn Y$}
      \pnpar{?Y,Yp}[par]{$\wn Y \parr Y^\bot$}
      \pnboxc{axprop',Y,Yp,?Y,par}{red}
      \pnprom{par}[!par]{$\oc (\wn Y \parr Y^\bot)$}
      \pnone[1cell']{1'}
      \pnboxc{1',1cell'}{orange}
      \pnprom{1'}[!1']{$\oc 1$}[2]
      \pnsemiboxc{axprop,Xp,1,ax,?!1}{green}
      \pnsemibox{axprop',!1',Y}
    }
    \qquad\quad
    \raisebox{-1cm}{
      \begin{forest}
        for tree={grow'=north, edge=<-}
        [$\textcolor{green}{\bullet}$, name = root1,  edge = green
        [$\textcolor{blue}{\bullet}$, name = box11,  edge = blue
        [$\textcolor{violet}{\bullet}$, name = box111, edge = violet]
        ]
        ]
        \draw[->,dotted,violet,line width=1pt] (box111) to[out=south west,
        in=north west] (root1);
        \draw[->,dotted,violet,line width=1pt,loop,looseness=4.8] (box111)
        to[out=north west, in=north east] (box111);
        \draw[->,dotted,blue,line width=1pt,loop,looseness=4.8] (box11)
        to[out=north east,in=south east] (box11);      
        \draw[->,dotted,green,line width=1pt,loop,looseness=4.8] (root1)
        to[out=north east,in=south east] (root1);
      \end{forest}
      \ \ 
      \begin{forest}
        for tree={grow'=north, edge=<-, s sep=3.5mm}
        [$\textcolor{gray}{\bullet}$, name = root2
        [$\textcolor{red}{\bullet}$, name = box21,  edge = red]
        [$\textcolor{orange}{\bullet}$, name = box22,  edge = orange]
        ]
        \draw[->,dotted,red,line width=1pt,loop,looseness=4.8] (box21)
        to[out=north west,in=north east] (box21);
        \draw[->,dotted,orange,line width=1pt,loop,looseness=4.8] (box22)
        to[out=north west,in=north east] (box22);
        \draw[->,dotted,gray,line width=1pt,loop,looseness=4.8] (root2)
        to[out=north east,in=south east] (root2);
      \end{forest}}
  }
	\vspace{-\baselineskip}
	\caption{A \MELL quasi-proof-structure $R$, its box-forest
          \texorpdfstring{$\TreeT_{R}$}{} (without dotted lines) and the
          reflexive-transitive closure
          \texorpdfstring{$\TreeT_{R}^\circlearrowleft$}{} of
          \texorpdfstring{$\TreeT_{R}$}{} (with also dotted lines).}
	\label{fig:pointed-proof-net}
\end{figure}



\subparagraph{The Taylor expansion}
Proof-structures have a tree structure made explicit by their
box-function. 
Following \cite{Wollic}, the definition of the Taylor expansion 
 uses this tree structure: first, we define how to ``\emph{expand}'' a tree---and more generally a forest---via a generalization of the notion of thick subtree \cite{Boudes:2009} (\Cref{def:thick}; roughly, a thick subforest of a box-forest says the number of copies of each box to be taken, iteratively), we then take all the expansions of the tree structure of a proof-structure and we \emph{pull} the approximations \emph{back} to the underlying graphs (\Cref{def:proto}), finally we \emph{forget} the tree structures associated with them (\Cref{def:taylor}). 

\begin{definition}[thick subforest]
\label{def:thick}
  \index{tree!thick subtree}	
  Let $\tau$ be a forest of rooted trees.
  A \emph{thick subforest} of $\tau$ is a pair $(\sigma, h)$ of a forest
  $\sigma$ of rooted trees and a graph morphism $h \colon \sigma \to \tau$ whose
  restriction to the roots of $\sigma$ is bijective.
\end{definition}

\begin{example}
  \label{ex:proto-taylor}
  The following is a graphical presentation of a thick subforest $(\tau,h)$ of
  the box-forest $\TreeT$ of the quasi-proof-structure in
  \Cref{fig:pointed-proof-net}, where the graph morphism
  $h \colon \tau \to \TreeT$ is depicted chromatically (same color means same
  image via $h$).
  \vspace{-0.3\baselineskip}
  \begin{align*}
		\scalebox{0.7}{
			\raisebox{1.3cm}{\Large$\tau \ = \ $}
			\begin{forest}
			for tree={grow'=north, edge=<-, l=6mm}
			[$\textcolor{green}{\bullet}$
				[$\textcolor{blue}{\bullet}$,  edge = blue
					[$\textcolor{violet}{\bullet}$, edge = violet]
				]
				[$\textcolor{blue}{\bullet}$,  edge = blue]
				[$\textcolor{blue}{\bullet}$,  edge = blue 
					[$\textcolor{violet}{\bullet}$,  edge = violet]
					[$\textcolor{violet}{\bullet}$,  edge = violet] 
				]
			]
			\end{forest}
			\
			\begin{forest}
			for tree={grow'=north, edge=<-, l=6mm}
			[$\textcolor{gray}{\bullet}$
				[$\textcolor{red}{\bullet}$,  edge = red]
				[$\textcolor{orange}{\bullet}$,  edge = orange]
				[$\textcolor{orange}{\bullet}$,  edge = orange]
				[$\textcolor{orange}{\bullet}$,  edge = orange] 
				[$\textcolor{orange}{\bullet}$,  edge = orange]
			]
			\end{forest}
			\raisebox{1.3cm}{\Large$\qquad\stackrel{h}{\longrightarrow}\qquad$}
			\begin{forest}
			for tree={grow'=north, edge=<-, l=6mm}
			[$\textcolor{green}{\bullet}$ 
				[$\textcolor{blue}{\bullet}$,  edge = blue
					[$\textcolor{violet}{\bullet}$,  edge = violet]
				]
			]
			\end{forest}
			\quad 
			\begin{forest}
			for tree={grow'=north, edge=<-, l=6mm}
			[$\textcolor{gray}{\bullet}$ 
				[$\textcolor{red}{\bullet}$,  edge = red]
				[$\textcolor{orange}{\bullet}$,  edge = orange]
			]
			\end{forest}
			\raisebox{1.3cm}{\Large$ \ =  \ \TreeT$}
		}
	\end{align*}
	Intuitively, it means that $\tau$ is obtained from $\TreeT$ by taking $3$ copies of the blue box, $1$ copy of the red box and $4$ copies of the orange box; in the first (resp.~second; third) copy of the blue box, $1$ copy (resp.~$0$ copies; $2$ copies) of the purple box has been taken.
\end{example}


\begin{definition}[proto-Taylor expansion]
\label{def:proto}
  Let $R = (|R|, \TreeT_R, \BoxFunction_{R})$ be a quasi-proof-structure. 
  The \emph{proto-Taylor expansion} of $R$ is the set $\ProtoTaylor{R}$ of thick
  subforests of $\TreeT_{R}$.
  
  Let $t = (\tau_t, h_t) \in \ProtoTaylor{R}$.
  The \emph{$t$-expansion} of $R$ is the pullback $(R_t, p_t, p_R)$ below, computed in the category of 
  directed graphs and graph morphisms.
  \tikzsetnextfilename{images/Taylor-pullback}
  \begin{commutativediagram}{3}{3}{0}
    {
      R_t \& \tau_t^{\circlearrowleft}\\
      \lvert R \rvert  \& \TreeT_{R}^{\circlearrowleft}\\
    };
    \path[-stealth]
    (m-1-1) edge [auto] node {$p_t$} (m-1-2)
    (m-1-1) edge [auto] node {$p_R$} (m-2-1)
    (m-1-2) edge [auto] node {$h_t^{\circlearrowleft}$} (m-2-2)
    (m-2-1) edge [auto] node {$\BoxFunction_{R}$} (m-2-2);
    \begin{scope}[shift=($(m-1-1)!.25!(m-2-2)$)]
      \draw +(-.35,0) -- +(0,0)  -- +(0,.28);
    \end{scope}
  \end{commutativediagram}
\end{definition}

Given a quasi-proof-structure $R$ and $t = (\tau_t, h_t) \in \ProtoTaylor{R}$,
the 
directed graph $R_t$ inherits labels on vertices and edges by
composition with the graph morphism $p_R \colon R_t \to |R|$.

Let $[\tau_t]$ be the forest made up of the roots of $\tau_t$ and
$\iota \colon \tau_t \to [\tau_t]$ be the graph morphism sending each vertex of
$\tau_t$ to the root below it; $\iota^\circlearrowleft$ induces by
post-composition a morphism
$\overline{h_t} = \iota^\circlearrowleft \circ p_t \colon R_t \to
[\tau_t]^{\circlearrowleft}$. The triple $(R_t, [\tau_t], \overline{h_t})$ is a
\polyadic quasi-proof-structure, and it is a \polyadic proof-structure if $R$ is a
proof-structure.
We can then define the \emph{Taylor expansion} $\Taylor{R}$ of a quasi-proof-structure $R$ (an example of an element of a Taylor expansion~is~in~\Cref{fig:taylor-expansion}).

\begin{definition}[Taylor expansion]
\label{def:taylor}
  Let $R$ be a quasi-proof-structure. The \emph{Taylor expansion} of $R$ is the
  set of \polyadic quasi-proof-structures
  $\Taylor{R} = \{ (R_t, [\tau_t], \overline{h_t}) \mid t = (\tau_t, h_t) \in
  \ProtoTaylor{R}\}$.
\end{definition}

An element $(R_t, [\tau_t], \overline{h_t})$ of the Taylor expansion of a quasi-proof-structure $R$ 
has much less
structure than the pullback $(R_t,p_t,p_R)$: the latter 
indeed is a \polyadic
quasi-proof-structure $R_t$ coming with its projections
$|R| \stackrel{p_R}{\longleftarrow} R_t \stackrel{p_t}{\longrightarrow}
\tau_t^\circlearrowleft$, which establish a precise correspondence between cells and edges of $R_t$ and cells and edges of $R$: 
a cell in $R_t$ is labeled (via 
the projections) by both the cell of $|R|$ and the branch of the box-forest of $R$ it
arose from.  But $(R_t, [\tau_t], \overline{h_t})$ where $R_t$ is without its
projections $p_t$ and $p_R$ loses the correspondence with
$R $. 

\begin{figure}[!t]
	\vspace{\beforepn}
	\centering
	\tikzsetnextfilename{images/taylor-3}
	\scalebox{0.8}{ 
		\pnet{
			\pnformulae{
				~~~\pnf[botax]{$\bot$}~\pnf[1ax]{$1$}~\pnf[botaxbis]{$\bot$}~\pnf[1axbis]{$1$}~\pnf[botaxtris]{$\bot$}~\pnf[1axtris]{$1$}~~~~~\pnf[Y]{$Y$}~\pnf[Yp]{$Y^\bot$} \\
				~~\pnf[bot]{$\bot$}~~~~~~~\pnf[1]{$1$}~\pnf[1bis]{$1$}~\pnf[1tris]{$1$} \\
				~\pnf[X]{$X$}~~~~~~~~~~~~~~\pnf[1']{$1$}~\pnf[1'']{$1$}~\pnf[1''']{$1$}~\pnf[1'''']{$1$} \\ 
				\pnf[Xp]{$X^\bot$}~~~~~~\pnf[1cow]{$\oc 1$}
			}
			\pnaxiom[ax]{botax,1ax}
			\pnaxiom[axprop]{Xp,X}
			\pnaxiom[axbis]{botaxbis,1axbis}
			\pnaxiom[axtris]{botaxtris,1axtris}
			\pnaxiom[axprop']{Y,Yp}
			\pnbag{}{1ax}[!1ax]{$\oc 1$}
			\pnbag{}{1axbis,1axtris}[!1axtris]{$\oc 1$}[1][.3]
			\pninitial{$\oc$}[cow]{1cow}
			\pnbot[botcell]{bot}
			\pnone[1cell]{1}
			\pnone[1cellbis]{1bis}			
			\pnone[1celltris]{1tris}
			\pnbag{}{1,1bis,1tris}[!1]{$\oc 1$}
			\pnexp{}{bot,botax,botaxbis,botaxtris}[?bot]{$\wn \bot$}[1.5][-1]
			\pnexp{}{!1ax,!1axtris,!1,1cow}[?!1]{$\wn\oc 1$}[1]
			\pntensor{X,?bot}[tens]{$X \otimes \wn \bot$}[.9]
			\pnexp{}{Y}[?Y]{$\wn Y$}
			\pnpar{?Y,Yp}[par]{$\wn Y \parr Y^\bot$}
			\pnbag{}{par}[!par]{$\oc (\wn Y \parr Y^\bot)$}
			\pnone[1cell']{1'}
			\pnone[1cell'']{1''}
			\pnone[1cell''']{1'''}
			\pnone[1cell'''']{1''''}
			\pnbag{}{1',1'',1''',1''''}[!1']{$\oc 1$}[1.5]
			\pnsemiboxc{axprop,Xp,1,1bis,1tris,ax,?!1}{green}
			\pnsemibox{axprop',!1',1'''',Y}
		}
		\qquad
		\begin{forest}
			for tree={grow'=north}
			[$\textcolor{green}{\bullet}$, name=root1]
		\end{forest}
		\ \ 
		\begin{forest}
			for tree={grow'=north}
			[$\textcolor{gray}{\bullet}$, name=root2]
		\end{forest}
	}
	\caption{The element of the Taylor expansion of the \MELL
          quasi-proof-structure $R$ in \cref{fig:pointed-proof-net}, obtained
          from the element of $\ProtoTaylor{R}$ depicted in
          \cref{ex:proto-taylor}.}
	\label{fig:taylor-expansion}
\end{figure}

\begin{remark}\label{rmk:conclusions}
	By definition, the Taylor expansion preserves conclusions: there is a bijection $\varphi$ from the conclusions of a quasi-proof-structure $R$ to the ones in each element $\rho$ of $\Taylor{R}$ such that $i$ and $\varphi(i)$ have the same type and the same root (\ie $\BoxFunction_{R}(i) = \BoxFunction_{\rho}(\varphi(i))$ up to isomorphism).
	Therefore, the types of $R$ and $\rho$ are the same (as a list of lists).
\end{remark}


\subparagraph{The \filled Taylor expansion}
\label{subsec:fat}

As discussed in \Cref{subsec:intro-fattened}
(p.~\pageref{subsec:intro-fattened}), our method needs to 
``represent'' the emptiness introduced by the Taylor expansion (taking $0$
copies of a box) so as to preserve the 
conclusions. 
So, an element of the \emph{\filled Taylor expansion} $\FatTaylor{R}$ of a quasi-proof-structure $R$ 
(an example is in \Cref{fig:emptyings}) is obtained from an element of $\Taylor{R}$
where a whole component can be erased and replaced by a
\(\maltese\)-cell with the same conclusions \mbox{(hence $\Taylor{R} \subseteq \FatTaylor{R}$).}

\begin{definition}[filled Taylor expansion]\label{def:FilledTaylor}
%
An \emph{emptying} of a \polyadic quasi-proof-structure \(\rho = (|\rho|, \TreeT, \BoxFunction)\) is the \polyadic quasi-proof-structure with the same conclusions as in \(\rho\), obtained from $\rho$ 
by replacing each of the components of some roots 
of $\TreeT$ with a $\maltese$-cell whose outputs~are~tails.

  The \emph{\filled Taylor expansion} \(\FatTaylor{R}\) of a quasi-proof-structure $R$ is the set of all the emptyings of every element of its Taylor expansion $\Taylor{R}$.
\end{definition}

\begin{figure}[!h]
  \vspace{\beforepn}
  \centering
  \scalebox{0.75}{
    \tikzsetnextfilename{images/emptying-1}
    \pnet{
      \pnformulae{
        ~~~~~~\pnf[yg]{}~~\pnf[1']{$1$}~\pnf[1'']{$1$}\\
        \pnf[Xp]{$X^\bot$}~~\pnf[tens]{$X\otimes \wn \bot$}~~\pnf[?!1]{$\wn\oc 1$}~~~\pnf[YY]{$\oc (\wn Y \parr Y^\bot)$}
      }
      \pndaimon[dai]{Xp,tens,?!1}
      \pncown[!par]{YY}
      \pnone[1cell']{1'}
      \pnone[1cell'']{1''}
      \pnbag{}{1',1''}[!1']{$\oc 1$}
      \pnsemiboxc{dai,Xp,tens,?!1}{green}
      \pnsemibox{YY,1cell',1cell'',!par,yg}
    }
  } \qquad
	\
	\begin{forest}
		for tree={grow'=north}
		[$\textcolor{green}{\bullet}$]
	\end{forest}
	\begin{forest}
		for tree={grow'=north}
		[$\textcolor{gray}{\bullet}$]
	\end{forest}
	\quad
  \caption{An element of the \filled Taylor expansion of the \MELL quasi-proof-structure in \cref{fig:pointed-proof-net}.}
  \label{fig:emptyings}
\end{figure}
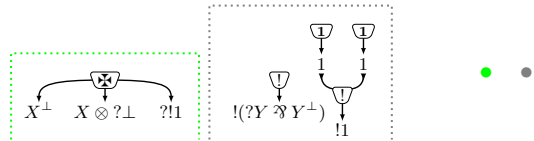


\section{Means of destruction: unwinding \texorpdfstring{$\MELL$}{MELL} quasi-proof-structures}
\label{sec:rules}

\begin{figure}[t]
	\centering
	\begin{align*}
	\scalebox{0.8}{
        $\begin{array}{rcl}
	(\Gamma_1; \cdots; \Gamma_k, \FlagType(i), \FlagType(i\!+\!1), \Gamma_{k}'; \cdots;	\Gamma_n)&\xrightarrow{\Ex_{i}}& (\Gamma_1; \cdots; \Gamma_k, \FlagType(i\!+\!1),  \FlagType(i), \Gamma_{k}'; \cdots;	\Gamma_n)\\
      	(\Gamma_1; \cdots; \Gamma_{k}, \FlagType(i), \FlagType(i\!+\!1), \Gamma_{k}'; \cdots ; \Gamma_n) &\xrightarrow{\Mix_{i}} & (\Gamma_1; \cdots; \Gamma_{k}, \FlagType(i); \FlagType(i\!+\!1), \Gamma_{k}'; \cdots ; \Gamma_n) \\
 	(\Gamma_1; \cdots; \Gamma_{k}; \FlagType(i), \FlagType(i\!+\!1); \Gamma_{k+2}; \cdots; \Gamma_n)
	&\xrightarrow{\Ax_{i}}& (\Gamma_1; \cdots; \Gamma_{k};	\Gamma_{k+2}; \cdots; \Gamma_n) \quad \text{with } \FlagType(i) = A = \FlagType(i\!+\!1)^\bot \\
	(\Gamma_1; \cdots; \Gamma_{k}; \cdots; \Gamma_n) &\xrightarrow{\Cut^i} &(\Gamma_1; \cdots; \Gamma_k, \FlagType(i), \FlagType(i\!+\!1); \cdots; \Gamma_n) \quad \text{with } \FlagType(i) = A = \FlagType(i\!+\!1)^\bot \\
	(\Gamma_1; \cdots; \Gamma_{k}; \Gamma_{k+1}, \FlagType(i); \Gamma_{k+2}; \cdots; \Gamma_n)
	&\xrightarrow{\maltese_{i}}
	&(\Gamma_1; \cdots; \Gamma_{k}; \Gamma_{k+2}; \cdots; \Gamma_n)\\
	(\Gamma_1; \cdots; \Gamma_k; \FlagType(i); \Gamma_{k+2}; \cdots; \Gamma_n) &\xrightarrow{\One_i}
	&(\Gamma_1; \cdots; \Gamma_k; \Gamma_{k+2}; \cdots; \Gamma_n) \quad \text{with } \FlagType(i) = \One \\
	(\Gamma_1; \cdots; \Gamma_k; \FlagType(i); \Gamma_{k+2}; \cdots; \Gamma_n) &\xrightarrow{\bot_i}
	&(\Gamma_1; \cdots; \Gamma_k; \Gamma_{k+2}; \cdots; \Gamma_n) \quad \text{with } \FlagType(i) = \bot \\
	(\Gamma_1; \cdots; \Gamma_k, \FlagType(i); \cdots; \Gamma_n) &\xrightarrow{\otimes_i}
	&(\Gamma_1; \cdots; \Gamma_{k}, A, B; \cdots;
	\Gamma_n) \quad \text{with } \FlagType(i) = A \otimes B \\
	(\Gamma_1; \cdots; \Gamma_{k}, \FlagType(i); \cdots; \Gamma_n) &\xrightarrow{\parr_i}
	&(\Gamma_1; \cdots; \Gamma_{k}, A, B; \cdots;
	\Gamma_n) \quad \text{with } \FlagType(i) = A \parr B \\
	(\Gamma_1; \cdots; \Gamma_{k}, \FlagType(i); \cdots; \Gamma_n) &\xrightarrow{\Contr_i}
	&(\Gamma_1; \cdots; \Gamma_{k}, \wn A, \wn A;
	\cdots, \Gamma_n) \quad \text{with } \FlagType(i) = \wn A 
	\\
	(\Gamma_1; \cdots; \Gamma_{k}, \FlagType(i); \cdots; \Gamma_n) &\xrightarrow{\Der_i}
	&(\Gamma_1; \cdots; \Gamma_{k},
	A; \cdots; \Gamma_n) \quad \text{with } \FlagType(i) = \wn A 
	\\
	(\Gamma_1; \cdots; \Gamma_{k}; \FlagType(i); \Gamma_{k+2}; \cdots; \Gamma_n) &\xrightarrow{\Weak_i}
	&(\Gamma_1; \cdots; \Gamma_{k}; \Gamma_{k+2}; \cdots; \Gamma_n) \quad \text{with } \FlagType(i) = \wn A 
	\\
	(\Gamma_1; \cdots; \wn \Gamma_k, \FlagType(i);  \cdots; \Gamma_n) &\xrightarrow{\BoxR_{i}}
	&(\Gamma_1; \cdots; \wn \Gamma_k, A; \cdots;	\Gamma_n) \quad \text{with } \FlagType(i) = \oc A
	\end{array}$
	}
	\end{align*}
	\vspace*{-\baselineskip}
	\caption{The generators of $\Path$. 
	In the source $\Gamma = (A_1, \dots, A_{i_1}; \cdots ; A_{i_{m-1}+1}, \dots, A_{i_n})$ 
	of each arrow, $\FlagType(i)$ denotes the $i^\text{th}$ formula in the flattening $(A_1, \dots, A_{i_1}, \dots, A_{i_{m-1}+1}, \dots, A_{i_n})$ of $\Gamma$.
	}
	\label{fig:elementary-path}
\end{figure}

Our aim is to deconstruct proof-structures (be they $\MELLdaimon$ or \polyadic) from
their conclusions. To do that, we introduce a category of rules of
deconstruction. The morphisms of this category are sequences of deconstructing
rules, acting on lists of lists of formul\ae{}. These morphisms act through
functors on quasi-proof-structures, exhibiting their sequential structure.

\begin{definition}[the category $\Path$]
  \label{def:unwinding-paths}
  Let $\Path$ be the category whose
  \begin{itemize}
  \item objects are lists $\Gamma = (\Gamma_1; \dots; \Gamma_n)$ of lists of
    $\MELL$ formul\ae{};
  \item arrows are freely generated by the \emph{elementary paths} in
    \Cref{fig:elementary-path}.
  \end{itemize}
  We call a \emph{path} any arrow $\xi \colon \Gamma \to \Gamma'$.  We write the
  composition of paths without symbols and in the diagrammatic order, so, if
  $\xi \colon \Gamma \to \Gamma'$ and $\xi' \colon \Gamma' \to \Gamma''$,
  $\xi\xi' \colon \Gamma\to \Gamma''$.
\end{definition}

\begin{example}
  \label{ex:barbara-path}
  $\parr_1\parr_2\parr_3 \otimes_1\otimes_3 \Ex_1\Ex_2\Mix_{2}\Ax_1
  \Ex_2\Mix_{2}\Ax_1 \Ax_1$ is a path of type
  $\big((X \otimes Y^\bot) \parr ((Y \otimes Z^\bot) \parr (X^\bot \parr Z))\big)
  \longrightarrow \emptynet$, where $\emptynet$ is the \emph{empty list} of lists of formul\ae.
\end{example}

We will tend to forget about exchanges and perform them silently (as it is
customary, for instance, in most presentations of sequent calculi).


\label{sec:unwinding}

The category $\Path$ acts on $\MELLdaimon$ quasi-proof-structures, exhibiting a sequential structure in their construction. 
For $\Gamma$ a list of list of $\MELL$ formul\ae{}, $\qMELL(\Gamma)$ is the set of $\MELLdaimon$ quasi-proof-structures of type $\Gamma$.
To ease the reading of the rewrite rules acting on a \MELLdaimon quasi-proof-structures $R$, we will only draw the parts of 
$R$ belonging to the relevant component; \eg, if we are interested in
an $\ax$-cell 
whose outputs are the conclusions $i$ and $i\!+\!1$, 
and it is the only cell in
a component, we will write
\parbox[b][2\baselineskip]{1.5cm}{\(
  \scalebox{\magicnumber}{
  \tikzsetnextfilename{images/mell-ax-example}
  \pnet{
  \pnformulae{
  \pnf[i]{$i$}~\pnf[j]{$i\!+\!1\,$}
  }
  \pnaxiom[ax]{i,j}
  \pnsemibox{ax}
  }}
\)}
ignoring the rest.

\begin{figure}[t]
  \centering
  \vspace{\beforepn}
  \begin{subfigure}[c]{0.2\textwidth}
  	\begin{align*}
  	\scalebox{\magicnumber}{
  		\tikzsetnextfilename{images/mell-exc-rewrite-1}
  		\pnet{
  			\pnsomenet[u]{}{1.75cm}{0.75cm}[at (-1,0.75)]
  			\pnoutfrom{u.-150}[A]{$\Gamma_k\,$}
  			\pnoutfrom{u.-130}[B]{$i$}
  			\pnoutfrom{u.-70}[A']{$i\!+\!1$}
  			\pnoutfrom{u.-30}[B']{$\,\,\Gamma_{k}'$}
  			\pnsemibox{u,A,A',B,B'}
  	}} \pnrewrite[\Ex_{i}]
  	\tikzsetnextfilename{images/mell-exc-rewrite-2}
  	\scalebox{\magicnumber}{
  		\pnet{
  			\pnsomenet[u]{}{1.75cm}{0.75cm}[at (-1,0.75)]
  			\pnoutfrom{u.-150}[A]{$\Gamma_{k}\,$}
  			\pnoutfrom{u.-110}[B]{$i\!+\!1$}
  			\pnoutfrom{u.-50}[A']{$i$}
  			\pnoutfrom{u.-30}[B']{$\,\,\Gamma_{k}'$}
  			\pnsemibox{u,A,B,A',B'}
  		}
  	}
  	\end{align*}
  			\vspace{-\baselineskip}
  	\caption{Exchange}
  	\label{fig:exc}
  \end{subfigure}%
  \begin{subfigure}[c]{0.2\textwidth}
    \begin{align*}
      \scalebox{\magicnumber}{
      \tikzsetnextfilename{images/mell-mix-rewrite-1}
      \pnet{
      \pnsomenet[u]{}{1cm}{0.75cm}[at (-1,0.75)]
      \pnoutfrom{u.-125}[A]{$\Gamma_k\,$}
      \pnoutfrom{u.-55}[B]{$i$}
      \pnsomenet[u']{}{1cm}{0.75cm}[at (0.3,0.75)]
      \pnoutfrom{u'.-125}[A']{\small$i\!+\!1$}
      \pnoutfrom{u'.-55}[B']{$\,\,\Gamma_k'$}
      \pnsemibox{u,u',A,A',B,B'}
      }} \pnrewrite[\Mix_{i}]
      \tikzsetnextfilename{images/mell-mix-rewrite-2}
      \scalebox{\magicnumber}{
      \pnet{
      \pnsomenet[u]{}{1cm}{0.75cm}[at (-1,0.75)]
      \pnoutfrom{u.-125}[A]{$\Gamma_{k}$}
      \pnoutfrom{u.-55}[B]{$i$}
      \pnsomenet[u']{}{1cm}{0.75cm}[at (0.8,0.75)]
      \pnoutfrom{u'.-125}[A']{\small$i\!+\!1$}
      \pnoutfrom{u'.-55}[B']{$\,\,\Gamma_k'$}
      \pnsemibox{u,A,B}
      \pnsemibox{u',A',B'}
      }
      }
    \end{align*}
    		\vspace{-\baselineskip}
    \caption{Mix}
    \label{fig:mix}
  \end{subfigure}%
  \begin{subfigure}[c]{0.3\textwidth}
    \begin{align*}
      \scalebox{\magicnumber}{
      \tikzsetnextfilename{images/mell-ax-rewrite-1}
      \pnet{
      \pnformulae{
      \pnf{$\cdots\,$}~\pnf[i]{$i$}~\pnf[j]{$i\!+\!1$}~\pnf{$\,\cdots$}
      }
      \pnaxiom[ax]{i,j}
      \pnsemibox{ax,i,j}
      }} \pnrewrite[\Ax_{i}]
      \tikzsetnextfilename{images/mell-ax-rewrite-2}
      \scalebox{\magicnumber}{
      \pnet{
      \pnformulae{
      \pnf{$\cdots$}~\pnf{$\cdots$}
      }      
      }}
    \end{align*}
    		\vspace{-\baselineskip}
    \caption{Hypothesis (\(\Ax, \maltese, \One, \bot, \Weak\))}
    \label{fig:hypothesis}
  \end{subfigure}
  \begin{subfigure}[b]{0.1\textwidth}
    \begin{align*}
      \scalebox{\magicnumber}{
      \tikzsetnextfilename{images/mell-cutaxiom-1}
      \pnet{
      \pnsomenet[u]{}{1.5cm}{0.75cm}[at (-2,0)]
      \pnoutfrom{u.-40}[A]{$\quad$}
      \pnoutfrom{u.-95}[B]{$\quad$}
      \pnoutfrom{u.-140}[k]{$\Gamma_k$}
      \pncut[cut]{A,B}
      \pnsemibox{u,cut}
      }}
      \pnrewrite[\Cut^i]
      \tikzsetnextfilename{images/mell-cutaxiom-2}
      \scalebox{\magicnumber}{
      \pnet{
      \pnsomenet[u]{}{1.5cm}{0.75cm}[at (-2,0)]
      \pnoutfrom{u.-40}[B]{$i\!+\!1$}
      \pnoutfrom{u.-95}[A]{$i$}
      \pnoutfrom{u.-140}[k]{$\Gamma_k$}
      \pnsemibox{u,A,B}
      }
      }
    \end{align*}
    		\vspace{-\baselineskip}
    \caption{Cut}
    \label{fig:cut}
  \end{subfigure}%
  \begin{subfigure}[b]{0.3\textwidth}
    \begin{align*}
      \tikzsetnextfilename{images/mell-tensor-rewrite-1}
      \scalebox{\magicnumber}{
      \pnet{
      \pnsomenet[u]{}{1.5cm}{0.75cm}
      \pnoutfrom{u.-140}[k]{$\Gamma_{k}$}
      \pnoutfrom{u.-95}[A]{$\quad$}
      \pnoutfrom{u.-40}[B]{$\quad$}
      \pntensor{A,B}[i]{$i$}
      \pnsemibox{u,k,A,B,i}
      }}
      \pnrewrite[\otimes_i]
      \tikzsetnextfilename{images/mell-tensor-rewrite-2}
      \scalebox{\magicnumber}{
      \pnet{
      \pnsomenet[u]{}{1.5cm}{0.75cm}
      \pnoutfrom{u.-140}[k]{$\Gamma_{k}$}
      \pnoutfrom{u.-95}[A]{$i$}
      \pnoutfrom{u.-40}[B]{$i\!+\!1$}
      \pnsemibox{u,k,A,B}
      }}
    \end{align*}
    		\vspace{-\baselineskip}
    \caption{Binary multiplicative (\(\otimes, \parr\))}
    \label{fig:tensor}
  \end{subfigure}
  \begin{subfigure}[b]{0.2\textwidth}
    \begin{align*}
      \tikzsetnextfilename{images/mell-contraction-rewrite-1}
      \scalebox{\magicnumber}{
      \pnet{
      \pnsomenet[u]{}{2cm}{0.75cm}
      \pnoutfrom{u.-155}[k]{$\Gamma_{k}$}
      \pnoutfrom{u.-125}[C]{\small$\cdots$}
      \pnoutfrom{u.-80}[B]{$\quad$}
      \pnoutfrom{u.-40}[A]{$\quad$}
      \pnoutfrom{u.-25}[D]{\small$\,\cdots$}
       \pnexp{}{A,B,C, D}[i]{$i$}
      \pnsemibox{u,i}
      }}
      \pnrewrite[\Contr_i]
      \tikzsetnextfilename{images/mell-contraction-rewrite-2}
      \scalebox{\magicnumber}{
      \pnet{
      \pnsomenet[u]{}{2cm}{0.75cm}
      \pnoutfrom{u.-155}[k]{$\Gamma_{k}$}
      \pnoutfrom{u.-125}[C]{\small$\cdots$}
      \pnoutfrom{u.-80}[B]{$\quad$}
      \pnexp{}{C,B}[i]{$i$}
      \pnoutfrom{u.-40}[A]{$\quad$}
      \pnoutfrom{u.-25}[D]{\small$\,\cdots$}
      \pnexp{}{A,D}[i1]{$i\!+\!1$}
      \pnsemibox{u,A,i,i1}
      }}
    \end{align*}
    		\vspace{-\baselineskip}
    \caption{Contraction}
    \label{fig:contraction}
  \end{subfigure}%
  \begin{subfigure}[b]{0.2\textwidth}
    \begin{align*}
      \tikzsetnextfilename{images/mell-dereliction-rewrite-1}
      \scalebox{\magicnumber}{
      \pnet{
      \pnsomenet[u]{}{1.25cm}{0.75cm}
      \pnoutfrom{u.-130}[k]{$\Gamma_k$}
      \pnoutfrom{u.-50}[A]{$\quad$}
      \pnexp{}{A}[i]{$i$}
      \pnsemibox{u,k,A,i}
      }}
      \pnrewrite[\Der_i]
      \tikzsetnextfilename{images/mell-dereliction-rewrite-2}
      \scalebox{\magicnumber}{
      \pnet{
      \pnsomenet[u]{}{1.25cm}{0.75cm}
      \pnoutfrom{u.-130}[k]{$\Gamma_{k}$}
      \pnoutfrom{u.-50}[A]{$i$}
      \pnsemibox{u,k,A}
      }}
    \end{align*}
    		\vspace{-\baselineskip}
    \caption{Dereliction}
    \label{fig:dereliction}
  \end{subfigure}%
	\begin{subfigure}[b]{0.2\textwidth}
		 \begin{align*}
			\tikzsetnextfilename{images/mell-box-1}
			\scalebox{\magicnumber}{
				\pnet{
					\pnsomenet[u]{}{1.25cm}{0.75cm}
					\pnbox{u}
					\pnoutfrom{u.-50}[pi1c]{$ $}
					\pnbag{}{pi1c}{$i$}
					\pnoutfrom{u.-130}[u2a2]{$ $}
					\pnexp{}{u2a2}[?u2a2]{\small $\wn \Gamma_{k}$}    
					\pnsemibox{pi1c,u2a2,?u2a2,u}
			}}
			\pnrewrite[\BoxR_{i}]
			\tikzsetnextfilename{images/mell-box-2}
			\scalebox{\magicnumber}{
				\pnet{
					\pnsomenet[u]{}{1.25cm}{0.75cm}
					\pnoutfrom{u.-50}[pi1c]{$i$}[3.55cm][]
					\pnoutfrom{u.-130}[u2a2]{$\quad$}
					\pnexp{}{u2a2}[?u2a2]{\small$\wn \Gamma_{k}$}
					\pnsemibox{u,pi1c,u2a2,?u2a2}
			}}
		\end{align*}
				\vspace{-\baselineskip}
		\caption{Box}
		\label{fig:box}
	\end{subfigure}
  \caption{Actions of elementary paths on $\MELLdaimon$ quasi-proof-structures.}
  \label{fig:actions-all}
\end{figure}

\begin{definition}[action of paths on $\MELL$ quasi-proof-structures]
  \label{def:unwinding}
  An elementary path {$a \colon \Gamma \to \Gamma'$} defines a relation
  $\pnrewrite[a] \ \subseteq \qMELL(\Gamma) \times \qMELL(\Gamma')$ (the \emph{action} of $a$)
  as the smallest relation containing all the cases in \Cref{fig:actions-all},
  with the following remarks:
  \begin{description}
  \item[mix] read in reverse, a quasi-proof-structure with two components is in
    relation with a proof-structure with the same module but the two roots of
    such components merged.
  \item[hypothesis] if
    $a\in \{\Ax_{i}, \maltese_{i}, \One_i, \bot_i, \Weak_i \}$, the rules
    have all in common to act by deleting a cell without inputs that is the only
    cell in its component. We have drawn the axiom case in
    \Cref{fig:hypothesis}, the others vary only by their number of conclusions.
  \item[cut] read in reverse, a quasi-proof-structure with two conclusions $i$
    and $i+1$ is in relation with the quasi-proof-structure where these two
    conclusions are cut. 
    This rule, from left to right, is non-deterministic (as
    there are many possible cuts).
  \item[binary multiplicatives] these rules delete a binary connective. We have
    only drawn the $\otimes$ case in \Cref{fig:tensor}, the $\parr$ case is
    similar.  
  \item[contraction] 
  splits 
  a $\wn$-cell with $h\!+\!k\!+\!2$ inputs into two $\wn$-cells with $h\!+\!1$ and $k\!+\!1$ inputs, respectively.
  \item[dereliction] 
  only applies if the $\wn$-cell (with $1$ input) does not shift a level in the box-forest.
  \item[box] 
  only applies if a box (and its frontier) is alone in its component.
  \end{description}

  This definition of the rewrite system is extended to define a relation
  $\pnrewrite[\xi] \ \subseteq \qMELL(\Gamma) \times \qMELL(\Gamma')$ (the \emph{action} of any path $\xi \colon \Gamma \to \Gamma'$) by composition of relations.
\end{definition}


Given two  \MELLdaimon quasi-proof-structures $R$ and $R'$, we say that a rule $a$ \emph{applies} to $R$ if there is a finite sequence of exchanges $\Ex_{i_1}\dots\Ex_{i_n}$ such that $R\pnrewrite[\Ex_{i_1}\dots\Ex_{i_n}a]R'$.

\begin{definition}[the functor $\qMELL$]
\label{def:FunctorqMELL}
  We define a functor $\qMELL \colon \Path \to \Rel$ by:
  \begin{itemize}
  \item \emph{on objects:} $\qMELL(\Gamma)$ is the set of $\MELLdaimon$ quasi-proof-structures of
    type $\Gamma$;
  \item \emph{on morphisms:} for $\xi \colon \Gamma \to \Gamma'$, $\qMELL(\xi) = \ \pnrewrite[\xi]$ (see \Cref{def:unwinding}).
  \end{itemize}
\end{definition}

  
Our rewrite rules enjoy two useful properties, expressed by Propositions \ref{prop:unwinding-co-functional} and \ref{prop:unwinding-to-empty}.

\begin{proposition}[co-functionality]
  \label{prop:unwinding-co-functional}
  Let $\xi \colon \Gamma \to \Gamma'$ be a path.
  The relation $\pnrewrite[\xi]$ is a co-function on the sets of underlying
  graphs, that is, 
	a function $ \op{\pnrewrite[\xi]} \colon \qMELL(\Gamma') \to \qMELL(\Gamma)$.
\end{proposition}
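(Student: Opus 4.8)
The plan is to use that $\Path$ is freely generated by the elementary paths of \Cref{fig:elementary-path}, together with functoriality of $\qMELL$ (\Cref{def:FunctorqMELL}) and the fact that co-functional relations are closed under composition. Recall that $\pnrewrite[\xi]$ being a co-function means that its opposite is single-valued: I must show that for every target $R'' \in \qMELL(\Gamma')$ there is at most one source $R \in \qMELL(\Gamma)$ with $R \pnrewrite[\xi] R''$. If $\xi = a_1 \cdots a_m$ is a composite of generators, then $\pnrewrite[\xi] = \pnrewrite[a_1]; \cdots; \pnrewrite[a_m]$, hence $\op{\pnrewrite[\xi]} = \op{\pnrewrite[a_m]}; \cdots; \op{\pnrewrite[a_1]}$; since single-valued relations compose to single-valued relations, it suffices to prove the statement for a single generator $a$ (the identity path being trivial). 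This is exactly the point where co-functionality, rather than functionality, is the natural property: several forward rules are non-deterministic, yet each becomes deterministic once reversed.

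For a generator $a \colon \Gamma \to \Gamma'$, I would exhibit the unique way of undoing it. Every forward rule of \Cref{fig:actions-all} modifies only the cells directly attached to the distinguished conclusion $i$ and leaves the rest of the module and of the box-forest untouched; hence any preimage $R$ of $R''$ must agree with $R''$ away from conclusion $i$, and the source type $\Gamma$ prescribes exactly which connective or cell has to be restored there. For the exchange this is a reordering of the two conclusions; for the hypothesis rules ($\Ax, \maltese, \One, \bot, \Weak$) it is the re-creation of the unique input-free cell forming a fresh component, whose shape is read off $\Gamma$; for the binary multiplicatives and the dereliction it is the addition of the single $\otimes$-, $\parr$- or $\wn$-cell gathering conclusion $i$ (and $i\!+\!1$). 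The two genuinely non-deterministic forward rules become deterministic in reverse: reversing $\Cut^i$ plugs a cut between the designated conclusions $i$ and $i\!+\!1$, a choice fixed by $i$; and reversing $\Contr_i$ fuses the two $\wn$-cells rooted at conclusions $i$ and $i\!+\!1$ into one, which is unambiguous precisely because the inputs of a $\wn$-cell carry no order (\Cref{def:proof-structure}), so no interleaving datum is needed. In each case $R$ is manifestly forced by $R''$.

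The two rules acting on the box-forest need more care, and I expect $\BoxR_i$ to be the real obstacle. Reversing $\Mix_i$ merges the two adjacent roots carrying conclusions $i$ and $i\!+\!1$ into a single root, the module being untouched; uniqueness is immediate, and one only has to check that the merged structure still satisfies \Cref{def:proof-structure} — in particular the interleaving condition on ordered output tails, which survives because the conclusions of the two merged roots form contiguous blocks. For $\BoxR_i$, reversing has to rebuild a box from the component of $R''$ whose conclusions are $A$ and the $\wn$-formulas $\wn\Gamma_k$, by reinstating the principal $\oc$-door above the $A$-conclusion and re-attaching the corresponding box node in the box-forest so that the exposed $\wn$-cells become the auxiliary doors. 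The delicate point is that this reconstruction is forced: the side condition that the box be \emph{alone in its component} (\Cref{def:unwinding}) guarantees that the content of the box to be rebuilt is exactly one whole component of $R''$, leaving no freedom about which cells fall inside; combined with the bijection between inputs of $\oc$-cells and edges of the box-forest (\Cref{def:proof-structure}), this pins down both the new $\oc$-cell, with its single input, and the new forest edge, so the preimage is unique. It remains to verify that the rebuilt object is a legal $\MELLdaimon$ quasi-proof-structure of type $\Gamma$ (one input per $\oc$-cell, total box-function bijection, correct nesting). Composing the generators via the argument above then gives co-functionality for every path $\xi$.
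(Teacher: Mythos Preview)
Your approach is exactly the paper's: reduce to generators by functoriality and then inspect each rule of \Cref{fig:actions-all}. The paper's own proof is the single line ``by simple inspection of the rewrite rules of \Cref{fig:actions-all}'', so what you wrote is a fully spelled-out version of that inspection, and the structure of your case analysis is correct.

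Two points nonetheless deserve tightening. First, a wording slip: you open by saying that co-functionality means the opposite is \emph{single-valued}, i.e.\ that every $R''$ has \emph{at most one} preimage. But the statement asserts that $\op{\pnrewrite[\xi]}$ is a \emph{function}, hence also \emph{total}: every $R'' \in \qMELL(\Gamma')$ must have \emph{exactly one} preimage. Your case analysis does in fact argue existence as well as uniqueness (you exhibit the preimage and check it is a legal $\MELLdaimon$ quasi-proof-structure), so this is only a mismatch between your preamble and your body; state both requirements up front.

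Second, and more substantively, in the $\Contr_i$ case you write ``fuses the two $\wn$-cells rooted at conclusions $i$ and $i{+}1$'' without justifying that there \emph{are} $\wn$-cells there. This is precisely the delicate point: a priori a conclusion of type $\wn A$ could be the output of an $\ax$-cell or a $\maltese$-cell, and then no $\Contr_i$-preimage exists. The atomic-axiom hypothesis is what rules out the axiom case, and you should invoke it explicitly --- the paper itself singles out this step (see the proof of \Cref{thm:projection-natural} in the appendix, and \Cref{sec:non-atomic}, where it is noted that totality of $\op{\pnrewrite[\Contr]}$ \emph{fails} once non-atomic axioms are allowed). So your $\Contr$ paragraph should read: since axioms are atomic, a conclusion of type $\wn A$ in a $\MELLdaimon$ quasi-proof-structure is necessarily the output of a $\wn$-cell (or of a $\maltese$-cell), and then describe the fusion.
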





\begin{lemma}[applicability of rules]
  \label{lem:cell-or-mix}
  Let $R$ be a non-empty $\MELLdaimon$ quasi-proof-structure.
  There exists a conclusion $i$ such that:
  \begin{itemize}
  \item either a rule in
    $\{\Ax_i, \One_i, \bot_i, \otimes_i, \parr_i, \Contr_i, \Der_i, \Weak_i,
    \Cut^i, \maltese_{i}, \allowbreak \BoxR_{i}\}$ applies to $R$;
		\item or $R \pnrewrite[\Mix_i] R'$ (where the conclusions affected by $\Mix_{i}$ are $i\!-\!k, \dots, i, i\!+\!1, \dots, i\!+\!\ell$) and $i\!-\!k,\dots,i$ are all the conclusions of either a box or an
    hypothesis cell, 
    and one of the components of $R'$ coincides with this cell or box (and~its~border).
  \end{itemize}
\end{lemma}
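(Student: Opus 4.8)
The plan is to work at depth $0$ of a single component of $R$ and to exhibit, among its bottom-most cells, one to which a rule applies. First I would record a structural fact used throughout: the module $|R|$ is a finite directed \emph{acyclic} graph. Indeed, inspecting \Cref{fig:resource-cells}, at every cell the type of the output is strictly larger (in number of connectives) than the type of each input (e.g.\ $A\otimes B$ over $A$ and $B$, $\oc A$ over $A$, $\wn A$ over $A$), while $\ax$-, $\One$-, $\bot$- and $\maltese$-cells have no inputs and $\cut$-cells no outputs; hence edge-types strictly increase along every directed path and no directed cycle can occur. Since $R$ is non-empty it has a non-empty component $C$ (the pre-image under $\BoxFunction$ of some root of $\TreeT$), whose depth-$0$ cells form a non-empty finite DAG, so \emph{sinks}---cells all of whose outputs are conclusions, together with the $\cut$-cells---exist.

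If some depth-$0$ cell is a $\cut$-cell, then $\Cut^i$ applies, exposing its two premises as the conclusions $i,i+1$. Otherwise every sink has all its outputs among the conclusions, and I would fix one such sink $s$ and branch on $\VertType(s)$. If $s$ is a $\otimes$- or a $\parr$-cell, then $\otimes_i$ resp.\ $\parr_i$ applies directly, since these rules carry no side condition beyond the output being a conclusion. If $s$ is a $\wn$-cell I distinguish on its inputs: with at least two inputs $\Contr_i$ applies (again unconditionally); with exactly one input that does not cross a box boundary $\Der_i$ applies, its side condition being precisely the absence of a level shift; with no input it is a weakening, hence a hypothesis cell, treated next. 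If $s$ is a hypothesis cell ($\ax$, $\maltese$, $\One$, $\bot$, or an input-free $\wn$) it is at once a source and a sink, so it is a connected piece of $|R|$ on its own: if this piece is all of $C$, the matching rule among $\Ax_i,\maltese_i,\One_i,\bot_i,\Weak_i$ applies directly; otherwise I peel it off by $\Mix_i$ (after re-ordering its conclusions to be contiguous through silent exchanges, which the technical contiguity condition of \Cref{def:proof-structure} renders consistent), landing in the second alternative of the statement.

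The remaining---and genuinely delicate---sinks are the \emph{doors}: a principal door (a $\oc$-cell, whose conclusion $\oc A$ is exposed) or a one-input auxiliary door (a $\wn$-cell with a single, level-shifting input). These belong to a box, and the pertinent rule $\BoxR_i$ requires the whole box together with its border to be alone in its component. The crux is thus to produce a box all of whose doors have their outputs among the conclusions. I would obtain it by contracting each depth-$0$ box together with its principal door and its one-input auxiliary doors into a single \emph{source} super-node; auxiliary $\wn$-doors carrying two or more inputs are kept as ordinary nodes, since $\Contr_i$ already disposes of them once they become sinks. The resulting super-graph is still a finite DAG, and any super-node that is a sink is exactly such a box. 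If that box with its border is all of $C$, then $\BoxR_i$ applies; otherwise, since a box receives no edge from outside and all its door-outputs are now conclusions, it is severed from the rest of $C$ by no edge, and I peel it off by $\Mix_i$ exactly as in the hypothesis case, again matching the second alternative.

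The hard part is this last box analysis. One must check that the box-forest and nesting conditions of \Cref{def:proof-structure} guarantee that such a box-with-border is indeed a union of connected pieces of $|R|$ that a single $\Mix_i$ can sever, and that the contiguity of a component's conclusions, combined with within-component exchanges, lets the doors' conclusions form the prefix $i{-}k,\dots,i$ demanded by the statement; this is where the precise interface between the graph-like and the forest-like structure of $R$ is felt. The other cases are routine once the acyclicity of $|R|$ and the exact side conditions of the door rules of \Cref{def:unwinding} are in hand.
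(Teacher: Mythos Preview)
Your argument is correct and supplies the detail that the paper entirely omits: the paper merely states that \Cref{lem:cell-or-mix} is ``proven by simple inspection of the rewrite rules of \Cref{fig:actions-all},'' without spelling out how one locates an applicable rule. Your acyclicity argument (via formula size), the case analysis on a sink at depth~$0$, and the super-graph contraction for the box case are all sound and constitute a genuine proof where the paper gives none.

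One small expository glitch is worth tightening. You fix a particular sink $s$ and branch on its type, but when $s$ happens to be a door, your super-graph argument may hand you as its sink a \emph{different} vertex --- possibly a non-super-node, i.e.\ an ordinary depth-$0$ cell that you could already have handled in one of the earlier branches. You never argue that the super-graph sink must be a super-node. The cleanest repair is to run the super-graph construction from the outset: every sink of that graph is either an ordinary cell (dispatched by type as in your non-door cases) or a contracted box with all door-outputs among the conclusions (dispatched by $\BoxR_i$ or $\Mix_i$). Alternatively, one can observe directly that if \emph{every} depth-$0$ sink is a door, then --- since one-input doors receive no depth-$0$ inputs --- tracing outputs from any non-door cell would stay among non-door cells and reach a non-door sink, a contradiction; hence the depth-$0$ module consists solely of doors, and each box is already isolated. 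Either fix is routine, and your overall strategy is the natural one.
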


\noindent \Cref{prop:unwinding-co-functional} and \Cref{lem:cell-or-mix} are proven by simple inspection of the rewrite rules of \Cref{fig:actions-all}.

\begin{proposition}[termination]
  \label{prop:unwinding-to-empty}
  Let $R$ be a $\MELLdaimon$ quasi-proof-structure of type $\Gamma$.
  There exists a path $\xi \colon \Gamma \to \emptynet$ such that
  $R \pnrewrite[\xi] \emptynet$.
\end{proposition}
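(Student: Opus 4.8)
The plan is to prove this by well-founded induction on a numerical measure of $R$ that strictly decreases along the deconstruction, using \Cref{lem:cell-or-mix} to guarantee that a measure-decreasing step is always available. First I would fix the measure: to each cell $v$ of $\lvert R\rvert$ assign a weight $w(v)$, setting $w(v)=n^2+1$ if $v$ is a $\wn$-cell with $n$ inputs and $w(v)=1$ otherwise, and let $\mu(R)=\sum_{v} w(v)$, the sum ranging over the cells of $\lvert R\rvert$. This is a natural number, and $\mu(R)=0$ exactly when $\lvert R\rvert$ has no cells; since conclusions are output tails of cells, this forces the type $\Gamma$ to be the empty list, so $R=\emptynet$.

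Then I would check that every generator other than $\Ex$ and $\Mix$ strictly decreases $\mu$. The rules $\Ax_i,\One_i,\bot_i,\Cut^i,\maltese_i,\otimes_i,\parr_i,\Weak_i,\BoxR_i$ each delete exactly one cell and create none, so $\mu$ drops by at least the weight of the deleted cell, which is $\geq 1$ (a weakening $\wn$-cell has $0$ inputs, hence weight $1$, and the $\oc$-principal door deleted by $\BoxR_i$ has weight $1$); $\Der_i$ deletes a $\wn$-cell with one input, of weight $2$. The only delicate generator is $\Contr_i$, which replaces a $\wn$-cell with $a+b$ inputs (where $a=h+1\geq 1$ and $b=k+1\geq 1$) by two $\wn$-cells with $a$ and $b$ inputs: here the cell count rises, but the weight drops by $(a+b)^2+1-(a^2+1)-(b^2+1)=2ab-1\geq 1$. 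This strict convexity of $n\mapsto n^2$ is the whole point of the quadratic weight. Both $\Ex_i$ and $\Mix_i$ leave $\mu$ unchanged, since they neither create nor destroy cells.

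Next, from \Cref{lem:cell-or-mix} I would extract, for every non-empty $R$ of type $\Gamma$, a short path $\zeta\colon\Gamma\to\Gamma''$ and a quasi-proof-structure $R''$ with $R\pnrewrite[\zeta]R''$ and $\mu(R'')<\mu(R)$. Indeed the lemma provides a conclusion $i$ such that either one of $\Ax_i,\One_i,\dots,\BoxR_i$ applies to $R$ — in which case that single step (up to the exchanges silently hidden in ``applies'') already decreases $\mu$ by the previous paragraph — or $R\pnrewrite[\Mix_i]R'$ where the mix isolates a box or a hypothesis cell as one of the components of $R'$. In the latter case $\mu(R')=\mu(R)$, but the isolated component now meets the side condition of $\BoxR_i$ (a box alone in its component) or of a hypothesis rule (a cell without inputs alone in its component), so a further step $R'\pnrewrite[\,\cdot\,]R''$ deletes a cell and gives $\mu(R'')<\mu(R')=\mu(R)$; taking $\zeta$ to be $\Mix_i$ followed by that step does the job. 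I would then conclude by well-founded induction on $\mu(R)$: if $\mu(R)=0$ then $R=\emptynet$ and $\xi=\id$ works; otherwise, with $\zeta$ and $R''$ as above, the induction hypothesis applied to $R''$ yields $\xi''\colon\Gamma''\to\emptynet$ with $R''\pnrewrite[\xi'']\emptynet$, and $\xi=\zeta\xi''\colon\Gamma\to\emptynet$ satisfies $R\pnrewrite[\xi]\emptynet$ by composition of actions (\Cref{def:unwinding}).

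The main obstacle is the design of the measure: because $\Contr_i$ increases the number of cells and $\Mix_i$ leaves it unchanged, a naive cell count does not terminate. The two ingredients that repair this are the strictly convex weight $n^2+1$ on $\wn$-cells, which forces contraction to decrease $\mu$ despite splitting one cell into two, and the structural fact packaged in \Cref{lem:cell-or-mix} that any applicable $\Mix_i$ can be paired with an immediately following box- or hypothesis-elimination that does strictly decrease $\mu$. Everything else is a routine inspection of \Cref{fig:actions-all}.
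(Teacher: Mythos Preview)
Your proof is correct and follows the same high-level strategy as the paper --- invoke \Cref{lem:cell-or-mix} and argue termination via a decreasing measure --- but your choice of measure is genuinely different. The paper defines the size of a proof-structure as the pair $(p,q)$ where $p$ is the \emph{multiset} of arities of its $\wn$-cells and $q$ is the number of non-$\maltese$ cells, then takes the size of a quasi-proof-structure to be the \emph{multiset} of sizes of its components, ordered by the nested multiset-plus-lexicographic order. You instead collapse everything to a single natural number via the convex weight $n^2+1$ on $\wn$-cells. Your measure is more elementary (no well-founded multiset orderings needed) and makes the contraction case a one-line inequality $2ab-1\geq 1$; the paper's multiset approach handles contraction by the standard ``replace one element by finitely many strictly smaller ones'' argument. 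Both approaches must treat $\Mix$ by pairing it with the follow-up deletion guaranteed by the second clause of \Cref{lem:cell-or-mix}, and you make this pairing explicit where the paper's sketch leaves it implicit.
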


To prove \Cref{prop:unwinding-to-empty}, it is enough to apply \Cref{lem:cell-or-mix} and show that the size of \MELLdaimon quasi-proof-structures decreases for each application of the rules in \Cref{fig:actions-all}, according to the following definition of size.
The \emph{size} of a proof-structure $R$ is the couple $(p,q)$ where
	\begin{itemize}
		\item $p$ is the (finite) multiset of the number of inputs of each $\wn$-cell in $R$;
		\item $q$ is the number of cells not labeled by $\maltese$ in $R$.
	\end{itemize}
	The \emph{size} of a quasi-proof-structure $R$ is the (finite) multiset of the sizes of its components. 
	Multisets are ordered as usual, couples are ordered lexicographically.


\section{Naturality of unwinding \texorpdfstring{\polyadicdaimon}{DiLL0}
  quasi-proof-structures}
\label{sec:naturality}

For $\Gamma$ a list of lists of $\MELL$ formul\ae{}, $\PolyPN(\Gamma)$ is the
set of \polyadicdaimon quasi-proof-structures of type $\Gamma$. 
For any set $X$, its powerset is denoted by $\PowerSet{X}$.

\begin{figure}[!t]
	\centering
	\vspace*{\beforepn}
	\begin{subfigure}[c]{0.3\textwidth}
		\begin{align*}
		\tikzsetnextfilename{images/mix-daimon-rewrite-1}
		\scalebox{\magicnumber}{
			\pnet{
				\pnformulae{
					\pnf[i]{$\Maybe{\Gamma_{k}}$}
					~\pnf[l]{$\Maybe{i}$}~\pnf[l1]{$\Maybe{i\!+\!1}$}~\pnf[j]{$\Maybe{\Gamma_{k}'}  $}
				}
				\pndaimon[d]{i,l,l1,j}
				\pnsemibox{d,i,l,l1,j}
		}} \pnrewrite[\Mix_{i}]
		\tikzsetnextfilename{images/mix-daimon-rewrite-2}
		\Bigg\{
		\scalebox{\magicnumber}{
			\pnet{
				\pnformulae{
					\pnf[i]{$\Maybe{\Gamma_{k}}$}~\pnf[ik]{$\Maybe{i}$}
					~~\pnf[j1]{$\Maybe{i\!+\!1}$}~\pnf[j]{$\Maybe{\Gamma_{k}'}$}
				}
				\pndaimon[d1]{i,ik}
				\pndaimon[d2]{j1,j}
				\pnsemibox{d1,i,ik}
				\pnsemibox{d2,j1,j}
		}}
		\Bigg\}
		\end{align*}
				\vspace{-\baselineskip}
		\caption{Mix}
		\label{fig:mix-daimon}
	\end{subfigure}%
	\begin{subfigure}[c]{0.3\textwidth}
		\begin{align*}
		\tikzsetnextfilename{images/daimon-ax-rewrite-1}
		\scalebox{\magicnumber}{
			\pnet{
				\pnformulae{
					\pnf{$\dots\,$}~\pnf[i]{$i$}~\pnf[j]{$i\!+\!1$}~\pnf{$\,\dots$}
				}
				\pndaimon[ax]{i,j}
				\pnsemibox{ax,i,j}
		}} \pnrewrite[\Ax_{i}]
		\tikzsetnextfilename{images/daimon-ax-rewrite-2}
		\Bigg\{
		\scalebox{\magicnumber}{
			\pnet{
				\pnformulae{
					\pnf{$\dots$}~\pnf{$\dots$}
				}
		}}
		\Bigg\}
		\end{align*}
				\vspace{-\baselineskip}
		\caption{Hypothesis \texorpdfstring{(\(\Ax, \maltese, \One, \bot, \Weak\))}{}}
		\label{fig:ax-daimon}
	\end{subfigure}

\vspace{-1.5\baselineskip}
	\begin{subfigure}[c]{0.3\textwidth}
		\begin{align*}
		\tikzsetnextfilename{images/daimon-cut-rewrite-1}
		\scalebox{\magicnumber}{
			\pnet{
				\pnformulae{
					\pnf[i]{$\Maybe{\Gamma_k}$}~\pnf[l]{ }
				}
				\pndaimon[d]{i}[1][.5]
				\pnsemibox{d,i}
		}} \pnrewrite[\Cut^i]
		\tikzsetnextfilename{images/daimon-cut-rewrite-2}
		\Bigg\{
		\scalebox{\magicnumber}{
			\pnet{
				\pnformulae{
					\pnf[i]{$\Maybe{\Gamma_{k}}$}~\pnf[n]{$i$}~\pnf[n']{$i\!+\!1$}~
				}
				\pndaimon[d]{i,n,n'}
				\pnsemibox{d,i,n,n'}
		}}
		\Bigg\}
		\end{align*}
				\vspace{-\baselineskip}
		\caption{Cut}
		\label{fig:cut-daimon}
	\end{subfigure}%
	\begin{subfigure}[c]{0.3\textwidth}
		\begin{align*}
		\tikzsetnextfilename{images/daimon-tensor-rewrite-1}
		\scalebox{\magicnumber}{
			\pnet{
				\pnformulae{
					\pnf[i]{$\Maybe{\Gamma_k}$}~~\pnf[l]{$i$}
				}
				\pndaimon[d]{i,l}[1][.2]
				\pnsemibox{d,i}
		}} \pnrewrite[\Contr_i]
		\tikzsetnextfilename{images/daimon-tensor-rewrite-2}
		\Bigg\{
		\scalebox{\magicnumber}{
			\pnet{
				\pnformulae{
					\pnf[i]{$\Maybe{\Gamma_{k}}$}~\pnf[n]{$i$}~\pnf[n']{$i\!+\!1$}~
				}
				\pndaimon[d]{i,n,n'}
				\pnsemibox{d,i,n,n'}
		}}
		\Bigg\}
		\end{align*}
				\vspace{-\baselineskip}
		\caption{Binary rule \texorpdfstring{(\(\otimes, \parr, \Contr\))}{}}
		\label{fig:daimon-tensor}
	\end{subfigure}%
	\begin{subfigure}[c]{0.2\textwidth}
		\begin{align*}
		\tikzsetnextfilename{images/daimon-dereliction-rewrite-1}
		\scalebox{\magicnumber}{
			\pnet{
				\pnformulae{
					\pnf[i]{\small $\Maybe{\Gamma_k}$}~\pnf[l]{$i$}
				}
				\pndaimon[d]{i,l}
				\pnsemibox{d,i,l}
		}} \pnrewrite[\Der_i]
		\tikzsetnextfilename{images/daimon-dereliction-rewrite-2}
		\Bigg\{
		\scalebox{\magicnumber}{
			\pnet{
				\pnformulae{
					\pnf[i]{\small $\Maybe{\Gamma_{k}}$}~\pnf[n']{\small $i$}~
				}
				\pndaimon[d]{i,n'}
				\pnsemibox{d,i,n'}
		}}
		\Bigg\}
		\end{align*}
				\vspace{-\baselineskip}
		\caption{Dereliction}
		\label{fig:daimon-dereliction}
	\end{subfigure}
	\begin{subfigure}[b]{0.2\textwidth}
		\tikzsetnextfilename{images/daimon-box-rewrite-1}
		\begin{align*}
		\scalebox{\magicnumber}{
			\pnet{
				\pnformulae{
					\pnf[i1]{$\wn{\Gamma_k}$}~\pnf[i]{$i$}
				}
				\pndaimon[dai]{i1,i}
				\pnsemibox{dai,i1,i}}
		} \pnrewrite[\BoxR_i]
		\tikzsetnextfilename{images/daimon-box-rewrite-1}
		\Bigg\{
		\scalebox{\magicnumber}{
			\pnet{
				\pnformulae{
					\pnf[i1]{$\wn{\Gamma_k}$}~\pnf[i]{$i$}
				}
				\pndaimon[dai]{i1,i}
				\pnsemibox{dai,i1,i}}
		}
		\Bigg\}
		\end{align*}
				\vspace{-\baselineskip}
		\caption{Daimoned box}
		\label{fig:daimon-box}
	\end{subfigure}%
	\begin{subfigure}[b]{0.2\textwidth}
		\tikzsetnextfilename{images/empty-box-rewrite-1}
		\begin{align*}
		\scalebox{\magicnumber}{
			\pnet{
				\pnformulae{
					\pnf[Gamma2]{$\wn\Gamma_k$}~
					\pnf[i]{$i$}
				}
				\pnwn[?2]{Gamma2}
				\pncown[!i]{i}
				\pnsemibox{Gamma2,i,?2,!i}
		}}
		\pnrewrite[\BoxR_i]
		\tikzsetnextfilename{images/empty-box-rewrite-2}
		\Biggl\{
		\scalebox{\magicnumber}{
			\pnet{
				\pnformulae{
					\pnf[Gamma2]{$\wn\Gamma_k$}~
					\pnf[i]{$i$}
				}
				\pndaimon[dai]{i,Gamma2}
				\pnsemibox{dai,i,Gamma2}
			}
		}
		\Biggr\}
		\end{align*}
				\vspace{-\baselineskip}
		\caption{Empty box}
		\label{fig:empty-box}   
	\end{subfigure}%
	\begin{subfigure}[b]{0.4\textwidth}
		\begin{align*}
		\tikzsetnextfilename{images/oc-split-1}
		\scalebox{\magicnumber}{
			\pnet{
				\pnformulae{
					\pnf{$\dots$}}
				\pnsomenet[pin]{$\rho_n$}{1.5cm}{0.5cm}[at (0,1.5)]
				\pnsomenet[pi1]{$\rho_1$}{1.5cm}{0.5cm}[at (0,0)]
				\pnoutfrom{pi1.-45}[pi1c]{$\quad$}
				\pnoutfrom{pin.-30}[pinc]{$\quad$}
				\pnbag[!]{}{pi1c,pinc}{$i$}[1][0.5]
				\pnoutfrom{pi1.-135}[pi1a1]{$\quad$}
				\pnoutfrom{pin.-150}[pina2]{$\quad$}
				\pnexp[?1]{}{pi1a1,pina2}{$\wn \Gamma_{k}$}[1][-0.5]				
				\pnsemibox{pi1,pin,!,?1}
		}}
		\pnrewrite[\BoxR_i]
		\tikzsetnextfilename{images/oc-split-2}
		\left\{
		\scalebox{\magicnumber}{
			\pnet{
				\pnformulae{
					\pnf{$\dots$}}
				\pnsomenet[pi1]{$\rho_j$}{1.5cm}{0.5cm}[at (0,0)]
				\pnoutfrom{pi1.-30}[pi1c]{$i$}[2]
				\pnoutfrom{pi1.-150}[pi1a1]{$\quad$}
				\pnexp[?2]{}{pi1a1}{$\wn \Gamma_{k}$}[1][-0.5]
				\pnsemibox{pi1,?2}
		}}\right\}_{1 \leqslant j \leqslant n}
		\end{align*}
		\vspace{-\baselineskip}
		\caption{Non-empty box ($n > 0$)}
		\label{fig:polybox}
	\end{subfigure}
	\caption{Actions of elementary paths on \texorpdfstring{$\maltese$}{}-cells and on a
      box in \texorpdfstring{$\PolyPN$}{DiLL}.}
	\label{fig:actions-daimon}
\end{figure}

\begin{definition}[action of paths on \texorpdfstring{\polyadicdaimon}{DiLL}
  quasi-proof-structures]
	\label{def:paths-on-polyadic}
	An elementary path $a \colon \Gamma \to \Gamma'$ defines a relation
	$\pnrewrite[a] \  \subseteq \PolyPN(\Gamma) \times \PowerSet{\PolyPN(\Gamma')}$ (the \emph{action} of $a$) by the
	rules in \Cref{fig:actions-all} (except \Cref{fig:box}, and with all the already remarked notes) and in
	\Cref{fig:actions-daimon}. 
	
	We extend this relation on
	$\PowerSet{\PolyPN(\Gamma)} \times\PowerSet{\PolyPN(\Gamma')}$ by the monad
	multiplication of $X \mapsto \PowerSet{X}$ and define $\pnrewrite[\xi]$ (the \emph{action} of any path $\xi \colon \Gamma \to \Gamma'$) by composition of relations.
\end{definition}

Roughly, all the rewrite rules in \Cref{fig:actions-daimon}---except
\Cref{fig:polybox}---mimic the behavior of the corresponding rule in \Cref{fig:actions-all} using a $\maltese$-cell. 
Note that in \Cref{fig:empty-box} a $\maltese$-cell is created.

The non-empty box rule 
in \Cref{fig:polybox} requires that, 
	on the left of $\pnrewrite[\BoxR_i]$, $\rho_{j}$ is not connected to $\rho_{j'}$ for $j \neq j'$, except for the $\oc$-cell and the $\wn$-cells in the conclusions.
Read in reverse, the rule
associates with a non-empty finite set of \polyadic quasi-proof-structures
$\{\rho_1, \dots, \rho_n\}$ the merging of $\rho_1, \dots, \rho_n$, that is the
\polyadic quasi-proof-structure depicted 
on the left of $\pnrewrite[\BoxR_i]$.


\begin{definition}[the functor \texorpdfstring{$\PPolyPN$}{PPolyPN}]
	\label{def:FunctorPPoly}
	We define a functor $\PPolyPN \colon \Path \to \Rel$ by:
	\begin{itemize}
		\item\emph{on objects:} for $\Gamma$ a list of lists of \MELL formul\ae{}, $\PPolyPN(\Gamma) = \PowerSet{\PolyPN(\Gamma)}$, the set of sets of \polyadicdaimon proof-structures of type $\Gamma$;
		\item\emph{on morphisms:} for $\xi : \Gamma \to \Gamma'$, $\PPolyPN(\xi) = \ \pnrewrite[\xi]$ (see \Cref{def:paths-on-polyadic}).
	\end{itemize}
\end{definition}

\begin{theorem}[naturality]
  \label{thm:projection-natural}
  The \filled Taylor expansion
  \SLV{}{\marginpar{\scriptsize Proof in \\\Cref{sec:naturality-proof}, p.~\pageref{sec:naturality-proof}}}
  defines a natural transformation
 	$\NatTransf \colon \PPolyPN \!\Rightarrow\! \qMELL \colon \Path \!\to\! \Rel$
  by: 
	$(\Pi, R) \!\in\! \NatTransf_{\Gamma}$ iff $\Pi \!\subseteq\! \FatTaylor{R}$ and the type of $R$ is $\Gamma$.
	Moreover, if $\Pi$ is a set of \polyadic proof-structures with $\Pi \pnrewrite[\xi] \Pi'$ and $\Pi' \subseteq \Taylor{R'}$, then $R$ is a \MELL proof-structure and $\Pi \subseteq \Taylor{R}$, where $R$ is such that $R \pnrewrite[\xi] R'$.\footnotemark
	\footnotetext{The part of the statement after ``moreover'' is our way to control the presence of $\maltese$-cells.}
\end{theorem}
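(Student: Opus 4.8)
The plan is to exploit that $\Path$ is \emph{freely} generated by the elementary paths of \Cref{fig:elementary-path}. Since $\PPolyPN$ and $\qMELL$ are functors sending composition to composition of relations, naturality squares paste along composition and hold trivially for identities; hence it suffices to verify the naturality condition on each elementary generator $a \colon \Gamma \to \Gamma'$. Unfolding relation composition in $\Rel$, the equation $\qMELL(a) \circ \NatTransf_{\Gamma} = \NatTransf_{\Gamma'} \circ \PPolyPN(a)$ is, for all $\Pi \in \PPolyPN(\Gamma)$ and $R' \in \qMELL(\Gamma')$, the biconditional
\[
\big(\exists R \in \qMELL(\Gamma):\ \Pi \subseteq \FatTaylor{R} \text{ and } R \pnrewrite[a] R'\big)
\ \Longleftrightarrow\
\big(\exists \Pi' \in \PPolyPN(\Gamma'):\ \Pi \pnrewrite[a] \Pi' \text{ and } \Pi' \subseteq \FatTaylor{R'}\big).
\]
I would prove the two implications separately (the two inclusions of the square). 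Throughout I use \Cref{prop:unwinding-co-functional} to name the unique $R$ with $R \pnrewrite[a] R'$, which lets me fix $R'$, take $R$ to be this co-functional preimage, and compare $\FatTaylor{R}$ and $\FatTaylor{R'}$ directly; and I use that the Taylor expansion preserves conclusions and types (\Cref{rmk:conclusions}) so that all the type side-conditions match automatically.

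For every generator other than $\BoxR_i$, the $\MELL$-action of \Cref{fig:actions-all} and the resource-action of \Cref{fig:actions-all,fig:actions-daimon} modify only cells at depth $0$ (conclusions and the cells immediately above them), never the interior of any box. Since by \Cref{def:proto,def:taylor} the (filled) Taylor expansion is computed by pulling back a thick subforest of the box-forest, and these rules leave the box-forest unchanged, the thick subforests of $R$ and of $R'$ are in canonical correspondence, and the matched elements of $\FatTaylor{R}$ and $\FatTaylor{R'}$ differ exactly by the same depth-$0$ modification. The biconditional then follows by inspection, matching each $\MELL$-rule with the correspondingly-shaped resource-rule; the daimon variants of \Cref{fig:actions-daimon} are precisely what is needed to cover the emptied components allowed in $\FatTaylor{\cdot}$ by \Cref{def:FilledTaylor}.

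The real work is $\BoxR_i$. Here $R$ has a box alone in its component, and $R'$ is obtained by erasing the box border and promoting its content to a fresh root (\Cref{fig:box}). Over this component an element of $\FatTaylor{R}$ is either (i) a single $\oc$-cell with $m \geqslant 0$ inputs gluing $m$ copies of a Taylor-expanded content, or (ii) a $\maltese$-cell from an emptying. Applying $\BoxR_i$ on the resource side, case (i) with $m>0$ is the non-empty-box rule \Cref{fig:polybox}, which separates the $m$ copies into the set of the corresponding elements of $\Taylor{R'}$; case (i) with $m=0$ is the empty-box rule \Cref{fig:empty-box}, replacing the empty $\oc$-cell by a $\maltese$-cell, matching an emptying of the freshly created root in $\FatTaylor{R'}$; case (ii) is the daimoned-box rule \Cref{fig:daimon-box}. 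The matching is exactly the reindexing of thick subforests induced by deleting the top root of the relevant trees of $\tau_t$ and turning its children into new roots (\Cref{def:thick,def:proto}), the copies counted by the multiset of inputs of the $\oc$-cell becoming the roots of the new component. I expect this to be the main obstacle: one must check that every way of opening the box on the resource side is realized by some thick subforest of $R'$ and conversely, and in particular that the zero-copy case is faithfully represented by the daimon of the filled Taylor expansion — which is the very reason the filled Taylor expansion and the forest (rather than tree) structure were introduced.

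For the \emph{moreover} I would argue by induction on the number of generators of $\xi$, again using \Cref{prop:unwinding-co-functional} to name the unique $R$ with $R \pnrewrite[\xi] R'$ and the one-step box analysis above to track $\maltese$-cells. The only generators whose resource action can turn a daimon-free structure into one containing a $\maltese$-cell are the empty-box rule (\Cref{fig:empty-box}) and $\maltese_i$; since $\Pi$ is a set of \polyadic proof-structures (daimon-free) and its image lies in the \emph{plain} expansion $\Taylor{R'}$, where no emptying is allowed, I would show that along $\xi$ no such creation occurs at a root or inside a shared box. Propagating this backward through each generator, every $\maltese$-cell occurring in $R$ must then sit alone inside a non-root box, i.e. $R$ is a genuine $\MELL$ proof-structure and not merely $\MELLdaimon$; the same bookkeeping upgrades $\Pi \subseteq \FatTaylor{R}$ (from the first part) to $\Pi \subseteq \Taylor{R}$. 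The delicate point is once more the box case: one has to verify that a daimon-filled box of $R$ contributes to $\Pi$ only through its zero-copy (hence daimon-free) expansions, consistently with $\Pi' \subseteq \Taylor{R'}$.
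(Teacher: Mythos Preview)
Your proposal is essentially the paper's own proof: reduce to elementary generators by free generation of $\Path$, prove the two inclusions of the naturality square by case analysis on the generator, isolate $\BoxR_i$ as the only case where the thick-subforest bookkeeping is non-trivial, and handle the ``moreover'' by tracking which rules can create $\maltese$-cells. The paper carries out exactly this, with the same three-way split (non-empty/empty/daimoned box) for $\BoxR_i$ and the same observation that only the empty-box rule introduces a daimon left-to-right.

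One inaccuracy worth fixing: your blanket claim that the non-$\BoxR$ generators ``leave the box-forest unchanged'' is false for $\Mix_i$ (which splits one root into two) and for the hypothesis rules $\Ax_i, \maltese_i, \One_i, \bot_i, \Weak_i$ (which delete a root). The conclusion you draw---that thick subforests of $R$ and $R'$ are in canonical bijection---is still correct in those cases, but the reason is that the modification touches only roots, and the root-bijectivity clause in \Cref{def:thick} forces the correspondence; it is not because the forests coincide. The paper treats these cases separately for precisely this reason. Also, $\maltese_i$ does not create a $\maltese$-cell from a daimon-free structure (it deletes one, so it cannot fire at all on such input); only the empty-box rule does, as the paper notes.
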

In other words, the following diagram commutes for every path
$\xi \colon \Gamma \to \Gamma'$.
\begin{commutativediagram}{3}{5}{5}
  {
    \PPolyPN(\Gamma) \& \PPolyPN(\Gamma')\\
    \qMELL(\Gamma) \& \qMELL(\Gamma')\\
  };
  \path[-stealth]
  (m-1-1) edge [auto] node {$\PPolyPN(\xi)$} (m-1-2)
  (m-1-1) edge [auto] node {$\NatTransf_{\Gamma}$} (m-2-1)
  (m-2-1) edge [auto] node {$\qMELL(\xi)$} (m-2-2)
  (m-1-2) edge [auto] node {$\NatTransf_{\Gamma'}$} (m-2-2);
\end{commutativediagram}
It means that given $\Pi \pnrewrite[\xi] \Pi'$, where $\Pi'\subseteq \FatTaylor{R'}$,
we can simulate backwards the rewriting to $R$ (this is where the
co-functionality of the rewriting steps expressed by \Cref{prop:unwinding-co-functional} comes handy) so that $R \pnrewrite[\xi] R'$ and $\Pi\subseteq \FatTaylor{R}$; and conversely, given
$R \pnrewrite[\xi] R'$, we can simulate the rewriting for any $\Pi\subseteq \FatTaylor{R}$, so that $\Pi \pnrewrite[\xi] \Pi'$ for some $\Pi'\subseteq \FatTaylor{R'}$.

\section{Glueability of \texorpdfstring{\polyadic}{DiLL} quasi-proof-structures}
\label{sec:glueable}

Naturality (\Cref{thm:projection-natural}) allows us to characterize the sets of \polyadic proof-structures that are in the Taylor expansion of some $\MELL$ proof-structure (\Cref{thm:characterization} below).

\begin{definition}[glueability]
  We say that a set $\Pi$ of \polyadicdaimon quasi-proof-structures is \emph{glueable}, if
  there exists a path $\xi$ such that
  $ \Pi \pnrewrite[\xi] \{\emptynet\}$.
\end{definition}

\begin{theorem}[glueability criterion]
  \label{thm:characterization}
%
Let \(\Pi\) be a set of \polyadic proof-structures:
\(\Pi\) is glueable if and only if $\Pi \subseteq \Taylor{R}$ for some \(\MELL\) proof-structure $R$.
%

\end{theorem}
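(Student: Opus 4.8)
The plan is to read off both implications as diagram chases in the naturality square of \Cref{thm:projection-natural}, using co-functionality (\Cref{prop:unwinding-co-functional}) and termination (\Cref{prop:unwinding-to-empty}) to supply the paths and the target proof-structure. I first record three facts used throughout. (i) Since the box-forest of $\emptynet$ is empty, its only thick subforest is trivial and it admits no nonempty emptying, so $\Taylor{\emptynet} = \FatTaylor{\emptynet} = \{\emptynet\}$; moreover $\emptynet$ is the unique element of $\qMELL(\emptynet)$. (ii) By \Cref{rmk:conclusions} every element of $\Taylor{R}$ has the type of $R$, so all members of $\Pi$ share one type $\Gamma$, and any path with source $\Gamma$ acts on the whole set $\Pi$. (iii) I assume $\Pi \neq \emptyset$: the empty set rewrites only to itself and so is never glueable, whereas $\emptyset \subseteq \Taylor{R}$ holds vacuously, so this degenerate case must be excluded from the statement.

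For the implication from inclusion to glueability, suppose $\Pi \subseteq \Taylor{R}$ with $R$ a \MELL proof-structure of type $\Gamma$. As $R$ is in particular a $\MELLdaimon$ quasi-proof-structure, \Cref{prop:unwinding-to-empty} gives a path $\xi \colon \Gamma \to \emptynet$ with $R \pnrewrite[\xi] \emptynet$, so $(R,\emptynet) \in \qMELL(\xi)$. From $\Pi \subseteq \Taylor{R} \subseteq \FatTaylor{R}$ we get $(\Pi, R) \in \NatTransf_{\Gamma}$. Chasing the pair $(\Pi, \emptynet)$ through the commuting square --- down $\NatTransf_\Gamma$ then across $\qMELL(\xi)$ equals across $\PPolyPN(\xi)$ then down $\NatTransf_{\emptynet}$ --- yields a set $\Pi'$ with $\Pi \pnrewrite[\xi] \Pi'$ and $\Pi' \subseteq \FatTaylor{\emptynet} = \{\emptynet\}$. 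A routine inspection of the rules of \Cref{fig:actions-daimon} shows that each applicable step sends a single element to a nonempty set, hence no nonempty set rewrites to $\emptyset$; since $\Pi \neq \emptyset$ this forces $\Pi' = \{\emptynet\}$, \ie $\Pi$ is glueable.

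For the converse, suppose $\Pi$ is glueable via $\xi \colon \Gamma \to \emptynet$ with $\Pi \pnrewrite[\xi] \{\emptynet\}$. By co-functionality (\Cref{prop:unwinding-co-functional}) the relation $\pnrewrite[\xi]$ is a co-function, so applying $\op{\pnrewrite[\xi]}$ to the unique element $\emptynet$ of $\qMELL(\emptynet)$ produces a unique $\MELLdaimon$ quasi-proof-structure $R$ of type $\Gamma$ with $R \pnrewrite[\xi] \emptynet$. I then invoke the ``moreover'' clause of \Cref{thm:projection-natural} with $\Pi' = \{\emptynet\}$ and $R' = \emptynet$: here $\Pi$ is a set of \polyadic proof-structures, $\Pi \pnrewrite[\xi] \{\emptynet\}$, and $\{\emptynet\} \subseteq \Taylor{\emptynet}$. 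Its conclusion is exactly that this $R$ is a genuine \MELL proof-structure and that $\Pi \subseteq \Taylor{R}$, which is what we want.

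The step I expect to carry the weight is the converse, and specifically the appeal to the ``moreover'' refinement of naturality. The bare commuting square would only deliver some $\MELLdaimon$ quasi-proof-structure $R$ with $\Pi \subseteq \FatTaylor{R}$, which is too weak on two counts: $R$ might carry daimons in positions forbidden for a \MELL proof-structure, and the inclusion might land in the filled rather than the ordinary Taylor expansion. It is precisely the ``moreover'' clause --- whose purpose is to control where daimons occur --- that rules both out, upgrading the witness to an honest \MELL proof-structure $R$ with $\Pi \subseteq \Taylor{R}$. All remaining work is the bookkeeping isolated in facts (i)--(iii) above, the substance having been front-loaded into \Cref{thm:projection-natural}.
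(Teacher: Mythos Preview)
Your proof is correct and follows the same line as the paper's: termination plus one direction of the naturality square for the forward implication, and the ``moreover'' clause of \Cref{thm:projection-natural} (together with co-functionality to produce the candidate $R$) for the converse. You are in fact more explicit than the paper on two points it leaves tacit: that the naturality square only delivers $\Pi' \subseteq \{\emptynet\}$ and one needs the nonemptiness observation to get $\Pi' = \{\emptynet\}$, and that the degenerate case $\Pi = \emptyset$ breaks the biconditional as literally stated.
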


\begin{proof}
	If $\Pi \subseteq \Taylor{R}$ for some $\MELL$ proof-structure $R$,
	then by termination (\Cref{prop:unwinding-to-empty})
	$R \pnrewrite[\xi] \emptynet$ for some path $\xi$, and so
	$\Pi \pnrewrite[\xi] \{\emptynet\}$ by naturality
	(\Cref{thm:projection-natural}, as 	$\FatTaylor{\emptynet} = \{\emptynet\}$).
	
  Conversely, if $\Pi \pnrewrite[\xi] \{\emptynet\}$ for some path $\xi$, then
  by naturality (\Cref{thm:projection-natural}, as
  $\Taylor{\emptynet} = \{\emptynet\}$ and $\Pi$ is a set of \polyadic proof-structures)
  $\Pi \subseteq \Taylor{R}$	\mbox{for some $\MELL$ proof-structure $R$.}
\end{proof}


\begin{example}
	\label{ex:not-coherent}
  The three \polyadic proof-structures $\rho_1, \rho_2, \rho_3$ below 
  are not glueable as a whole, but are glueable two by two. 
  In fact, there is no \MELL proof-structure whose Taylor expansion contains $\rho_1, \rho_2, \rho_3$, but any pair of them is in the Taylor expansion of some \MELL proof-structure. 
%
  This is a slight variant of the example in \cite[pp.~244-246]{Tasson:2009}.
  
  \begin{center}
  \vspace{\beforepn}
  \scalebox{0.6}{
  	\pnet{
  		\pnformulae{
  			\pnf[1]{$\mathbf{1}$}~\pnf[2]{$\mathbf{1}$}~\pnf[3]{$\mathbf{1}$}~
  			\pnf[4]{$\mathbf{1}$}~
  			\pnf[5]{$\bot$}~\pnf[6]{$\bot$}~\pnf[7]{$\bot$}~
  			\pnf[8]{$\bot$}~\pnf[9]{$\bot$}
  		}
  		\pnone{1}
  		\pnone{2}
  		\pnone{3}
  		\pnone{4}
  		\pnbot{5}
  		\pnbot{6}
  		\pnbot{7}
  		\pnbot{8}
  		\pnbot{9}
  		\pnbag{}{1,2}{$\oc \mathbf{1}$}
  		\pnbag{}{3}{$\oc \mathbf{1}$}
  		\pnbag{}{4}{$\oc \mathbf{1}$}
  		\pnexp{}{5}{$\wn \bot$}
  		\pnexp{}{6,7}{$\wn \bot$}
  		\pnexp{}{8,9}{$\wn \bot$}
  }}
  \qquad \ 
  \tikzsetnextfilename{images/tasson-example-2}
  \scalebox{0.6}{
  	\pnet{
  		\pnformulae{
  			\pnf[1]{$\mathbf{1}$}~\pnf[2]{$\mathbf{1}$}~\pnf[3]{$\mathbf{1}$}~
  			\pnf[4]{$\mathbf{1}$}~
  			\pnf[5]{$\bot$}~\pnf[6]{$\bot$}~\pnf[7]{$\bot$}~
  			\pnf[8]{$\bot$}~\pnf[9]{$\bot$}
  		}
  		\pnone{1}
  		\pnone{2}
  		\pnone{3}
  		\pnone{4}
  		\pnbot{5}
  		\pnbot{6}
  		\pnbot{7}
  		\pnbot{8}
  		\pnbot{9}
  		\pnbag{}{1}{$\oc \mathbf{1}$}
  		\pnbag{}{2,3}{$\oc \mathbf{1}$}
  		\pnbag{}{4}{$\oc \mathbf{1}$}
  		\pnexp{}{5,6}{$\wn \bot$}
  		\pnexp{}{7}{$\wn \bot$}
  		\pnexp{}{8,9}{$\wn \bot$}
  }}
  \qquad \ 
  \tikzsetnextfilename{images/tasson-example-3}
  \scalebox{0.6}{
  	\pnet{
  		\pnformulae{
  			\pnf[1]{$\mathbf{1}$}~\pnf[2]{$\mathbf{1}$}~\pnf[3]{$\mathbf{1}$}~
  			\pnf[4]{$\mathbf{1}$}~
  			\pnf[5]{$\bot$}~\pnf[6]{$\bot$}~\pnf[7]{$\bot$}~
  			\pnf[8]{$\bot$}~\pnf[9]{$\bot$}
  		}
  		\pnone{1}
  		\pnone{2}
  		\pnone{3}
  		\pnone{4}
  		\pnbot{5}
  		\pnbot{6}
  		\pnbot{7}
  		\pnbot{8}
  		\pnbot{9}
  		\pnbag{}{1}{$\oc \mathbf{1}$}
  		\pnbag{}{2}{$\oc \mathbf{1}$}
  		\pnbag{}{3,4}{$\oc \mathbf{1}$}
  		\pnexp{}{5,6}{$\wn \bot$}
  		\pnexp{}{7,8}{$\wn \bot$}
  		\pnexp{}{9}{$\wn \bot$}
      }
    }
  \end{center}
\end{example}

An example of the action of a path starting from a \polyadic
proof-structure $\rho$ and ending in \(\{\emptynet\}\) can be found in
\Cref{fig:longex1,fig:longex2}. Note that it is by no means the shortest
possible path. When replayed backwards, it induces a \(\MELL\) proof-structure
\(R\) such that \(\rho\in \Taylor{R}\).

\begin{figure}[t]
  \centering
  \vspace{\beforepn}
  \begin{align*}
    \begin{array}{cccccc}
      \rho=
      &\tikzsetnextfilename{images/long-path-poly-1}
        \Biggl\{
        \scalebox{\smallproofnets}
        {\pnet{
        \pnformulae{
        ~\pnf[?b]{\small $\wn\wn \bot$}~~~
        \pnf[!A-oA]{\small$\!\!\!\oc\oc(A^{\bot} \!\parr\! A) \quad $}~\\
        }
        \pnwn[wn]{?b}
        \pncown[cown]{!A-oA}
        \pnsemibox{?b,!A-oA,wn,cown}
      }}
      \Biggr\}&\pnrewrite[\BoxR_2]&
      \tikzsetnextfilename{images/long-path-poly-2}
      \Biggl\{
      \scalebox{\smallproofnets}                              
      {\pnet{
      \pnformulae{~\pnf[?b]{\small $\wn\wn \bot$}~~~
      	\pnf[!A-oA]{\small$\!\!\oc(A^{\bot} \!\parr\! A) \quad $}~
      }
      \pndaimon[d]{?b,!A-oA}
      \pnsemibox{?b,!A-oA,d}
      }}\Biggr\}
    	&\pnrewrite[\Der_1]&
      \tikzsetnextfilename{images/long-path-poly-3}
      \Biggl\{
      \scalebox{\smallproofnets}                              
      {\pnet{
      \pnformulae{~\pnf[?b]{\small $\wn \bot$}~~~
      \pnf[!A-oA]{\small$\!\!\oc (A^{\bot} \!\parr\! A)\quad$}~}
      \pndaimon[d]{?b,!A-oA}
      \pnsemibox{?b,!A-oA,d}
      }}\Biggr\}\\
      R=&\tikzsetnextfilename{images/long-path-mell-1}
      \scalebox{\smallproofnets}{\pnet{
      \pnformulae{~~\pnf[a1]{\small $A^{\bot}$}~\pnf[a2]{\small $A$}~\\
      \pnf[b1]{\small $\bot$}~\pnf[b2]{\small $\bot$}}
      \pnaxiom[axa]{a1,a2}
      \pnpar{a1,a2}[A-oA]{\small $A^{\bot} \!\parr\! A$}
      \pnbag{}{A-oA}[!A-oA]{\small $\oc (A^{\bot} \parr A)$}
      \pnbot[bot1]{b1}
      \pnbot[bot2]{b2}
      \pnexp{}{b1,b2}[wbot]{\small $\wn \bot$}[1.5]]
      \pnbox{axa,b1,b2,A-oA,a1,a2}
      \pnbox{axa,b1,b2,!A-oA,wbot,a1,a2}
      \pnexp{}{wbot}[wwbot]{\small $\wn \wn \bot$}[1.2]
      \pnprom{!A-oA}[!!A-oA]{\small $\oc \oc (A^{\bot} \parr A)$}
      \pnsemibox{wbot,b1,b2,a1,a2,axa,!!A-oA}
      }}
    	&\pnrewrite[\BoxR_2]&
      \tikzsetnextfilename{images/long-path-mell-2}
      \scalebox{\smallproofnets}{\pnet{
      \pnformulae{~~\pnf[a1]{\small $A^{\bot}$}~\pnf[a2]{\small $A$}~\\
      \pnf[b1]{\small $\bot$}~\pnf[b2]{\small $\bot$}}
      \pnaxiom[axa]{a1,a2}
      \pnpar{a1,a2}[A-oA]{\small $A^{\bot} \!\parr\! A$}
      \pnbot[bot1]{b1}
      \pnbot[bot2]{b2}
      \pnexp{}{b1,b2}[wbot]{\small $\wn \bot$}[1.5]
      \pnbox{axa,b1,b2,A-oA,a2}
      \pnprom{A-oA}[!A-oA]{\small $\oc (A^{\bot} \!\parr\! A)$}
      \pnexp{}{wbot}[wwbot]{\small $\wn \wn \bot$}
      \pnsemibox{wbot,wwbot,b1,b2,a1,a2,axa}
      }}
    	&\pnrewrite[\Der_1]&
      \tikzsetnextfilename{images/long-path-mell-3}
      \scalebox{\smallproofnets}{\pnet{
      \pnformulae{~~\pnf[a1]{\small $A^{\bot}$}~\pnf[a2]{\small $A$}~\\
      	\pnf[b1]{\small $\bot$}~\pnf[b2]{\small $\bot$}}
      \pnaxiom[axa]{a1,a2}
      \pnpar{a1,a2}[A-oA]{\small $A^{\bot} \!\parr\! A$}
      \pnbot[bot1]{b1}
      \pnbot[bot2]{b2}
      \pnexp{}{b1,b2}[wbot]{\small $\wn \bot$}[1.5]
      \pnbox{axa,b1,b2,A-oA,a2}
      \pnprom{A-oA}[!A-oA]{\small $\oc (A^{\bot} \!\parr\! A)$}
            \pnsemibox{wbot,b1,b2,a1,a2,axa}
      }}
    	\\
    	\hline
      \pnrewrite[\BoxR_2]&
      \tikzsetnextfilename{images/long-path-poly-4}
      \Biggl\{\scalebox{\smallproofnets}{\pnet{
      \pnformulae{~\pnf[?b]{\small $\wn \bot$}~~~\pnf[A]{\small $\!A^\bot \!\parr\! A$}~}
      \pndaimon[d]{?b,A}
      \pnsemibox{?b,d,A}
      }}\Biggr\}
    	&\pnrewrite[\parr_2]&
      \tikzsetnextfilename{images/long-path-poly-5}
      \Biggl\{\scalebox{\smallproofnets}{\pnet{
      \pnformulae{~\pnf[?b]{\small $\wn \bot$}~~
      \pnf[Ab]{\small$A^{\bot}$}~\pnf[A]{\small $A$}~}
      \pndaimon[d]{?b,Ab,A}
      \pnsemibox{?b,d,A}
      }}\Biggr\}
      &\pnrewrite[\Mix_1]&                                
      \tikzsetnextfilename{images/long-path-poly-6}
      \Biggl\{\scalebox{\smallproofnets}{\pnet{
      \pnformulae{~\pnf[?b]{\small $\wn \bot$}~~~
      \pnf[Ab]{\small$A^{\bot}$}~\pnf[A]{\small $A$}~}
      \pndaimon[daib]{?b}
      \pndaimon[daiA]{Ab,A}
      \pnsemibox{daib,?b}
      \pnsemibox{daiA,Ab,A}                     
      }}\Biggr\}\\        
      \pnrewrite[\BoxR_2]&
      \tikzsetnextfilename{images/long-path-mell-4}
      \scalebox{\smallproofnets}{\pnet{
      \pnformulae{
      \pnf[b1]{\small $\bot$}~\pnf[b2]{\small $\bot$} ~\pnf[a1]{\small
                          $A^{\bot}$}~\pnf[a2]{\small $A$}~}
      \pnaxiom[axa]{a1,a2}
      \pnpar[A-oA]{a1,a2}{\small $A^\bot \!\parr\! A$}
      \pnbot[bot1]{b1}
      \pnbot[bot2]{b2}
      \pnexp{}{b1,b2}[wbot]{\small $\wn \bot$}
      \pnsemibox{bot1,wbot,a1,a2,axa,A-oA}
      }}
    	&\pnrewrite[\parr_2]&
      \tikzsetnextfilename{images/long-path-mell-5}
      \scalebox{\smallproofnets}{\pnet{
      \pnformulae{
      \pnf[b1]{\small $\bot$}~\pnf[b2]{\small $\bot$} ~\pnf[a1]{\small
                          $A^{\bot}$}~\pnf[a2]{\small $A$}~}
      \pnaxiom[axa]{a1,a2}
      \pnbot[bot1]{b1}
      \pnbot[bot2]{b2}
      \pnexp{}{b1,b2}[wbot]{\small $\wn \bot$}     
      \pnsemibox{bot1,wbot,a1,a2,axa}                   
      }}&\pnrewrite[\Mix_1]&
      \tikzsetnextfilename{images/long-path-mell-6}
      \scalebox{\smallproofnets}{\pnet{
      \pnformulae{
      \pnf[b1]{\small $\bot$}~\pnf[b2]{\small $\bot$}~~~\pnf[a1]{\small
                          $A^{\bot}$}~\pnf[a2]{\small $A$}~}
      \pnaxiom[axa]{a1,a2}
      \pnbot[bot1]{b1}
      \pnbot[bot2]{b2}
      \pnexp{}{b1,b2}[wbot]{\small $\wn \bot$}
      \pnsemibox{bot1,bot2,wbot}                       
      \pnsemibox{axa,a1,a2}
      }}
    \end{array}
  \end{align*}
   \vspace*{-\baselineskip}
  \caption{The path
    \texorpdfstring{$\BoxR_2 \, \Der_1 \, \BoxR_2 \, \parr_2 \, \Mix_1 \,
      \Ax_{2,3} \, \Contr_1 \, \Der_2 \, \Mix_1 \, \bot_2 \, \Der_1 \, \bot_1$}{} witnessing that
    \texorpdfstring{$\rho \in \Taylor{R}$}{} (to be continued on \Cref{fig:longex2}).}
  \label{fig:longex1}
\end{figure}

\begin{figure}[t]
  \centering
  \vspace{\beforepn}
  \begin{align*}
    \begin{array}{cccccccccc}
      \pnrewrite[\Ax_{2,3}]&
      \tikzsetnextfilename{images/long-path-poly-7}
      \Biggl\{\scalebox{\smallproofnets}{\pnet{
      \pnformulae{~\pnf[?b]{\small $\wn \bot$}~}
      \pndaimon[d]{?b}
      \pnsemibox{d,?b}
      }}\Biggr\}&\pnrewrite[\Contr_1]
      &\tikzsetnextfilename{images/long-path-poly-8}
      \Biggl\{\scalebox{\smallproofnets}{\pnet{
      \pnformulae{~\pnf[b1]{\small $\wn \bot$}~\pnf[b2]{\small $\wn \bot$}~}
      \pndaimon[d]{b1,b2}
            \pnsemibox{d,b1,b2}
      }}\Biggr\}&\pnrewrite[\Der_2]
      &\tikzsetnextfilename{images/long-path-poly-9}
      \Biggl\{\scalebox{\smallproofnets}{\pnet{
      \pnformulae{~\pnf[b1]{\small $\wn \bot$}~\pnf[b2]{\small $\bot$}~}
      \pndaimon[d]{b1,b2}
            \pnsemibox{d,b1,b2}
      }}\Biggr\}&\pnrewrite[\Mix_1]
      &\tikzsetnextfilename{images/long-path-poly-10}
      \Biggl\{\scalebox{\smallproofnets}{\pnet{
      \pnformulae{~\pnf[b1]{\small $\wn \bot$}~~~\pnf[b2]{\small $\bot$}~}
      \pndaimon[db1]{b1}
      \pndaimon[db2]{b2}
      \pnsemibox{db1,b1}  
      \pnsemibox{db2,b2}
      }}\Biggr\}\\
      \pnrewrite[\Ax_{2,3}]&
      \tikzsetnextfilename{images/long-path-mell-7}
      \scalebox{\smallproofnets}{\pnet{
      \pnformulae{
      \pnf[b1]{\small $\bot$}~\pnf[b2]{\small $\bot$}}
      \pnbot[bot1]{b1}
      \pnbot[bot2]{b2}
      \pnexp{}{b1,b2}[wbot]{\small $\wn \bot$}
      \pnsemibox{bot1,bot2,b1,b2,wbot}
      }}&\pnrewrite[\Contr_1]
      &\tikzsetnextfilename{images/long-path-mell-8}
      \scalebox{\smallproofnets}{\pnet{
      \pnformulae{
      \pnf[b1]{\small $\bot$}~\pnf[b2]{\small $\bot$}}
      \pnbot[bot1]{b1}
      \pnbot[bot2]{b2}
      \pnexp{}{b1}[wb1]{\small $\wn \bot$}
      \pnexp{}{b2}[wb2]{\small $\wn \bot$}
      \pnsemibox{bot1,bot2,b1,b2,wb1,wb2}
      }}&\pnrewrite[\Der_2]
      &\tikzsetnextfilename{images/long-path-mell-9}
      \scalebox{\smallproofnets}{\pnet{
      \pnformulae{
      \pnf[b1]{\small $\bot$}~\pnf[b2]{\small $\bot$}}
      \pnbot[bot1]{b1}
      \pnbot[bot2]{b2}
      \pnexp{}{b1}[wb1]{\small $\wn \bot$}
      \pnsemibox{bot1,bot2,b1,b2,wb1}
      }}&\pnrewrite[\Mix_1]&\tikzsetnextfilename{images/long-path-mell-10}
      \scalebox{\smallproofnets}{\pnet{
      \pnformulae{
      \pnf[b1]{\small $\bot$}~~~\pnf[b2]{\small $\bot$}}
      \pnbot[bot1]{b1}
      \pnbot[bot2]{b2}
      \pnexp{}{b1}[wb1]{\small $\wn \bot$}
      \pnsemibox{bot1,b1,wb1}
      \pnsemibox{bot2,b2}
      }}\\
    	\hline
      \pnrewrite[\bot_2]
      &\tikzsetnextfilename{images/long-path-poly-11}
      \Biggl\{\scalebox{\smallproofnets}{\pnet{
      \pnformulae{~\pnf[b1]{\small $\wn \bot$}~}
      \pndaimon[d]{b1}
      \pnsemibox{d,b1}
      }}\Biggr\}&\pnrewrite[\Der_1]
      &\tikzsetnextfilename{images/long-path-poly-12}
      \Biggl\{\scalebox{\smallproofnets}{\pnet{
      \pnformulae{~\pnf[b1]{\small $\bot$}~}
      \pndaimon[d]{b1}
      \pnsemibox{d,b1}
      }}\Biggr\}&\pnrewrite[\bot_1]
      &\left\{\quad\emptynet\quad\right\}\\
      \pnrewrite[\bot_2]
      &\tikzsetnextfilename{images/long-path-mell-11}
      \scalebox{\smallproofnets}{\pnet{
      \pnformulae{
      \pnf[b1]{\small $\bot$}}
      \pnbot[bot1]{b1}
      \pnexp{}{b1}[wb1]{\small $\wn \bot$}
      \pnsemibox{b1,bot1}
      }}&\pnrewrite[\Der_1]
      &\tikzsetnextfilename{images/long-path-mell-12}
      \scalebox{\smallproofnets}{\pnet{
      \pnformulae{
      \pnf[b1]{\small $\bot$}}
      \pnbot[bot1]{b1}
      \pnsemibox{bot1,b1}
      }}&\pnrewrite[\bot_1] & \left\{\quad\emptynet\quad\right\}
    \end{array}
  \end{align*}
  \vspace{-\baselineskip}
  \caption{The path
    \texorpdfstring{$\BoxR_2 \, \Der_1 \, \BoxR_2 \, \parr_2 \, \Mix_1 \,
      \Ax_{2,3} \, \Contr_1 \, \Der_2 \, \Mix_1 \, \bot_2 \, \Der_1 \, \bot_1$}{} witnessing that
    \texorpdfstring{$\rho \in \Taylor{R}$}{} (continued from \Cref{fig:longex1}).}
  \label{fig:longex2}
\end{figure}


\section{Non-atomic axioms}
\label{sec:non-atomic}
From now on, we relax the definition of quasi-proof-structure (\Cref{def:proof-structure} and \Cref{fig:resource-cells}) so that the outputs of any $\ax$-cell are labeled by dual \MELL formul\ae, not necessarily atomic.
We can extend our results to this more general setting, with
some technical complications. 
Indeed, 
the rewrite rule for contraction has to be modified. Consider a set of \polyadic
proof-structures consisting of just a singleton which is a $\maltese$-cell. 
The contraction rule rewrites it as:
\vspace{\beforepn}
\begin{align*}
  \scalebox{\magicnumber}{
  \pnet{
  \pnformulae{
  \pnf[!A1]{$\oc A^{\bot}$}~\pnf[!A2]{$\oc A^{\bot}$}~\pnf[?A]{$\wn A$}
  }
  \pndaimon[d]{!A1,!A2,?A}[1][.2]
  }} \pnrewrite[\Contr_3]
  \tikzsetnextfilename{images/daimon-contraction-rewrite-2}
  \Big\{
  \scalebox{\magicnumber}{
  \pnet{
  \pnformulae{
  \pnf[!A1]{$\oc A^{\bot}$}~\pnf[!A2]{$\oc A^{\bot}$}~\pnf[?A1]{$\wn A$}~
  \pnf[?A2]{$\wn A$}
  }
  \pndaimon[d]{!A1,!A2,?A1,?A2}
  }}
  \Big\}
\textup{ which is then in the Taylor expansion of }
  \scalebox{\magicnumber}{
  \pnet{
  \pnformulae{
  \pnf[!A1]{$\oc A^{\bot}$}~\pnf[!A2]{$\oc A^{\bot}$}~\pnf[?A1]{$\wn A$}~
  \pnf[?A2]{$\wn A$}
  }
  \pnaxiom{!A1,?A2}[1.5]
  \pnaxiom{!A2,?A1}
  }}
\end{align*}
on which no contraction rewrite rule $\Contr$ can be applied backwards, breaking the naturality.
The failure of the naturality is actually due to the failure of
\Cref{prop:unwinding-co-functional} in the case of the rewrite rule \(\Contr\):
\(\op{\pnrewrite[\Contr]}\) (\ie $\pnrewrite[\Contr]$ read from the right to the left) is functional but not total.

The solution to this conundrum lies in changing the contraction rule for
\polyadicdaimon quasi-proof-structures, by explicitly adding \(\wn\)-cells. Hence, the
application of a contraction step $\Contr$ in the \polyadicdaimon quasi-proof-structures precludes the possibility of anything else but a \(\wn\)-cell on the \(\MELLdaimon\) side, which
allows the contraction step $\Contr$ to be applied backwards.

In turn, this forces us to change the definition of the \filled Taylor
expansion into a \emph{\(\eta\)-\filled Taylor expansion}, which has to include
elements where a $\maltese$-cell (representing an empty component) has some of its
outputs connected to \(\wn\)-cells.


\begin{definition}[\(\eta\)-\filled Taylor expansion]
  An \emph{$\eta$-emptying} of a \polyadic quasi-proof-structure
\(\rho = (|\rho|, \TreeT, \BoxFunction)\) is a \polyadic quasi-proof-structure with the same conclusions as in \(\rho\), obtained from $\rho$ by replacing each of the components of some roots of $\TreeT$ with a $\maltese$-cell whose outputs are either tails or inputs of a $\wn$-cell whose output $i$ is a tail, provided that $i$ is the output tail of a $\wn$-cell in $\rho$.

  The \emph{\(\eta\)-\filled Taylor expansion} \(\FatTaylorEta{R}\) of a
  quasi-proof-structure $R$ is the set of all the $\eta$-emptyings of every element of
  its Taylor expansion $\Taylor{R}$. 
\end{definition}

Note that the \(\eta\)-\filled Taylor expansion contains all the elements of
the \filled Taylor expansion and some more, such as the one in
\Cref{fig:eta-emptyings-text}.

\begin{figure}[!t]
  \vspace{\beforepn}
  \centering
  \scalebox{0.75}{
    \tikzsetnextfilename{images/eta-emptying-1}
    \pnet{
      \pnformulae{
        \pnf[Xp]{$X^\bot$}~~\pnf[tens]{$X\otimes \wn \bot$}
        ~~\pnf[!1]{$\oc 1$}~~~\pnf[yg]{}~\pnf[YY]{$\oc (\wn Y \parr Y^\bot)$}~~\pnf[1']{$1$}~\pnf[1'']{$1$}\\
      }
      \pndaimon[dai]{Xp,tens,!1}
      \pnexp{}{!1}[?!1]{$\wn\oc 1$}
      \pncown[!par]{YY}
      \pnone[1cell']{1'}
      \pnone[1cell'']{1''}
      \pnbag{}{1',1''}[!1']{$\oc 1$}
      \pnsemiboxc{dai,Xp,tens, !1} 
      {green}
      \pnsemibox{YY,1cell',1cell'',!par,yg} 
    }
  } \qquad
	\
	\begin{forest}
		for tree={grow'=north}
		[$\textcolor{green}{\bullet}$]
	\end{forest}
\!\!
	\begin{forest}
		for tree={grow'=north}
		[$\textcolor{gray}{\bullet}$]
	\end{forest}
  \caption{An element of the \(\eta\)-\filled Taylor expansion of the \mbox{\MELL
    		 quasi-proof-structure in Fig.\,\ref{fig:pointed-proof-net}}.}
  \label{fig:eta-emptyings-text}
\end{figure}

%
Functors $\qMELL$ and \(\PPolyPN\) are defined as before (Def.~\ref{def:FunctorqMELL} and \ref{def:FunctorPPoly}, respectively),\footnotemark
\footnotetext{Remember that now, for $\Gamma$ a list of list of $\MELL$ formul\ae{}, $\qMELL(\Gamma)$ (resp.~$\PolyPN(\Gamma)$) is the set of $\MELLdaimon$ (resp.~\polyadicdaimon) quasi-proof-structures of type $\Gamma$, possibly with non-atomic axioms.}
except that the image of \(\PPolyPN\) on the generator \(\Contr_i\) (\Cref{fig:daimon-tensor}) is changed to
\vspace{\beforepn}
\begin{align*}
\tikzsetnextfilename{images/daimon-contraction-rewrite-1}
\scalebox{\magicnumber}{
	\pnet{
		\pnformulae{
			\pnf[i]{$\MMaybe{\Gamma_k}$}~~\pnf[l]{$i$}
		}
		\pndaimon[d]{i,l}[1][.2]
		\pnsemibox{d,i}
}} \pnrewrite[\Contr_i]
\tikzsetnextfilename{images/daimon-contraction-rewrite-2}
\Bigg\{
\scalebox{\magicnumber}{
	\pnet{
		\pnformulae{
			\pnf[i]{$\MMaybe{\Gamma_{k}}$}~\pnf[n]{\quad}~\pnf[n']{\quad}~
		}
		\pndaimon[d]{i,n,n'}
		\pnexp{}{n}[?n]{$i$}
		\pnexp{}{n'}[?n']{$i\!+\!1$}
		\pnsemibox{d,i,n,n'}
}}
\Bigg\}
\end{align*}

\vspace{-.5\baselineskip}
\noindent where \(\MMaybe{\Gamma_k}\) signifies that some of the conclusions of \(\Gamma_k\)
might be connected to the \(\maltese\)-cell through a \(\wn\)-cell\SLV{}{ (see \Cref{app:general} for details)}.
We can prove similarly our main results.

\begin{theorem}[naturality with $\eta$]
  The \(\eta\)-\filled Taylor expansion defines a natural transformation
  $\NatTransf_{\eta} \colon \PPolyPN \To \qMELL \colon \Path \!\to\! \Rel$
  by: 
  $(\Pi, R) \!\in\! \NatTransfEta_{\Gamma}$ iff $\Pi \!\subseteq\! \FatTaylorEta{R}$ and the type of $R$ is $\Gamma$.
	Moreover, if $\Pi$ is a set of \polyadic proof-structures with $\Pi \pnrewrite[\xi] \Pi'$ and $\Pi' \subseteq \Taylor{R'}$, then $R$ is a \MELL proof-structure and $\Pi \subseteq \Taylor{R}$, where $R$ is such that $R \pnrewrite[\xi] R'$.
\end{theorem}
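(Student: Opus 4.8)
The plan is to reproduce the argument for \Cref{thm:projection-natural} almost verbatim, isolating the single place where the non-atomic case genuinely differs, namely the contraction generator. Since the arrows of \Path{} are freely generated by the elementary paths, naturality of \(\NatTransfEta\) reduces to checking that the naturality square commutes on each elementary path \(a \colon \Gamma \to \Gamma'\); the case of a composite path then follows by horizontally pasting the commuting squares, using associativity of relational composition together with functoriality of both \qMELL{} and \PPolyPN{}. Thus for each generator \(a\) I would establish the equality of relations \(\NatTransfEta_{\Gamma}\,; \qMELL(a) = \PPolyPN(a)\,; \NatTransfEta_{\Gamma'}\): explicitly, that for all \(\Pi\) and \(R'\) there is an \(R\) with \(\Pi \subseteq \FatTaylorEta{R}\) and \(R \pnrewrite[a] R'\) if and only if there is a \(\Pi'\) with \(\Pi \pnrewrite[a] \Pi'\) and \(\Pi' \subseteq \FatTaylorEta{R'}\).

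For every generator other than \(\Contr_i\) the two actions are literally those of \Cref{fig:actions-all} and \Cref{fig:actions-daimon}, unchanged from the atomic development, and an \(\eta\)-emptying differs from an ordinary emptying only by allowing a \(\maltese\)-cell to sit behind a \(\wn\)-cell. Since none of these rules inspects what is hidden behind such a \(\wn\)-cell, each verification is the one already carried out for \Cref{thm:projection-natural}, now reading \(\FatTaylorEta{-}\) for \(\FatTaylor{-}\); in particular \Cref{prop:unwinding-co-functional} is invoked exactly as before to recover, from a given \(R'\), the unique \(R\) with \(R \pnrewrite[a] R'\) needed in the backward direction.

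The only genuinely new case is \(\Contr_i\), where \Cref{prop:unwinding-co-functional} fails: \(\op{\pnrewrite[\Contr_i]}\) on the \MELLdaimon side is functional but not total, as the displayed two-axioms counterexample shows. For the forward inclusion I would apply contraction pointwise to the elements of \(\Pi \subseteq \FatTaylorEta{R}\) -- using the unchanged splitting of a genuine \(\wn\)-cell (\Cref{fig:actions-all}) on non-empty components and the modified rule, which places two fresh \(\wn\)-cells fed by the \(\maltese\)-cell, on emptied components -- and check that every resulting structure is again an \(\eta\)-emptying of an element of \(\Taylor{R'}\). For the backward inclusion, the point that \emph{replaces} co-functionality is that, by construction, \emph{every} element of a \(\Pi'\) produced by \(\Contr_i\) carries \(\wn\)-cells on the conclusions \(i\) and \(i\!+\!1\); hence any \(R'\) with \(\Pi' \subseteq \FatTaylorEta{R'}\) must itself carry \(\wn\)-cells on \(i\) and \(i\!+\!1\), so the backward contraction on the \MELL side \emph{is} applicable and yields the desired \(R\) with \(R \pnrewrite[\Contr_i] R'\) and \(\Pi \subseteq \FatTaylorEta{R}\).

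I expect the main obstacle to be precisely this last lemma -- that \(\wn\)-cells on \(i\) and \(i\!+\!1\) throughout \(\Pi'\) force \(\wn\)-cells on the same conclusions of \(R'\). Its proof rests on the bookkeeping of the \(\eta\)-\filled Taylor expansion: an \(\eta\)-emptying may legitimately hide a \(\maltese\)-cell behind a \(\wn\)-cell, and one must verify that this is exactly the extra freedom the modified rule introduces, so that a refinement of \Cref{rmk:conclusions} -- the correspondence between the \(\wn\)-structure at the conclusions of \(R'\) and that of its \(\eta\)-expansions -- is preserved. Finally, the \textbf{moreover} clause follows by induction along \(\xi\), tracking \(\maltese\)-cells: the empty-box rule (\Cref{fig:empty-box}) is the only generator creating a \(\maltese\)-cell on the \MELLdaimon side, so when \(\Pi\) contains no \(\maltese\)-cell and \(\Pi' \subseteq \Taylor{R'}\) is the unfilled expansion, no emptying is ever forced, the reconstructed \(R\) has neither a daimoned box nor an isolated \(\maltese\)-cell at a root, and therefore \(R\) is a genuine \MELL proof-structure with \(\Pi \subseteq \Taylor{R}\).
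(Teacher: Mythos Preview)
Your proposal is correct and follows essentially the same approach as the paper. The paper itself gives no separate proof for the \(\eta\)-version: it only states that the result is proven ``similarly'' to \Cref{thm:projection-natural}, after isolating exactly the point you isolate---the failure of totality of \(\op{\pnrewrite[\Contr_i]}\) in the non-atomic setting---and describing the fix via the modified \(\Contr_i\) rule and the ``provided that'' clause in the definition of \(\eta\)-emptying. Your identification of the key lemma (that \(\wn\)-cells on conclusions \(i\) and \(i{+}1\) throughout \(\Pi'\) force \(\wn\)-cells on \(R'\), precisely because the \(\eta\)-filled expansion only permits a \(\wn\)-cell in front of a \(\maltese\) when the underlying Taylor element already had one there) is exactly the mechanism the paper points to, and your treatment of the ``moreover'' clause mirrors the argument at the end of \Cref{sec:naturality-proof}.
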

\begin{theorem}[glueability criterion with $\eta$]
  Let \(\Pi\) be a set of \polyadic proof-structures, not necessarily with
  atomic axioms:
  \(\Pi\) is glueable iff $\Pi \subseteq \Taylor{R}$ for some  \(\MELL\) proof-structure~$R$.
%
\end{theorem}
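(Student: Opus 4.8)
The plan is to replay, almost verbatim, the proof of the atomic glueability criterion (\Cref{thm:characterization}), substituting the $\eta$-filled Taylor expansion $\FatTaylorEta{-}$ for the filled Taylor expansion $\FatTaylor{-}$ and invoking the naturality theorem with $\eta$ in place of \Cref{thm:projection-natural}. The two facts I will lean on are termination (\Cref{prop:unwinding-to-empty}), which is insensitive to the non-atomic relaxation because it concerns only the $\MELLdaimon$ side and the same size measure, and the naturality theorem with $\eta$ just established. I will also use the two boundary computations $\Taylor{\emptynet} = \{\emptynet\} = \FatTaylorEta{\emptynet}$: the empty proof-structure has an empty box-forest, so its only thick subforest is $\emptynet$ and there are no roots to empty (nor $\eta$-empty).

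For the left-to-right implication, suppose $\Pi \subseteq \Taylor{R}$ for some \MELL proof-structure $R$ of type $\Gamma$. Since $\Taylor{R} \subseteq \FatTaylorEta{R}$, we have $(\Pi, R) \in \NatTransfEta_{\Gamma}$. By termination there is a path $\xi \colon \Gamma \to \emptynet$ with $R \pnrewrite[\xi] \emptynet$. Thus $(\Pi, \emptynet)$ lies in the composite relation $\qMELL(\xi) \circ \NatTransfEta_{\Gamma}$, hence, by commutation of the naturality square, in $\NatTransfEta_{\emptynet} \circ \PPolyPN(\xi)$; this produces some $\Pi'$ with $\Pi \pnrewrite[\xi] \Pi'$ and $\Pi' \subseteq \FatTaylorEta{\emptynet} = \{\emptynet\}$, that is $\Pi \pnrewrite[\xi] \{\emptynet\}$, which is exactly glueability.

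For the converse, suppose $\Pi$ is glueable, say $\Pi \pnrewrite[\xi] \{\emptynet\}$ for a path $\xi \colon \Gamma \to \emptynet$, with $\Pi$ a set of \polyadic proof-structures, hence free of $\maltese$-cells. By co-functionality (\Cref{prop:unwinding-co-functional}) there is a unique $\MELLdaimon$ quasi-proof-structure $R$ of type $\Gamma$ with $R \pnrewrite[\xi] \emptynet$. It remains to check that $R$ carries no spurious $\maltese$-cells and that $\Pi$ sits inside the \emph{honest} Taylor expansion of $R$: this is exactly the content of the ``moreover'' clause of the naturality theorem with $\eta$, applied with $R' = \emptynet$ and $\Pi' = \{\emptynet\} \subseteq \Taylor{\emptynet}$. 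Since $\Pi$ has no $\maltese$-cells, that clause yields that $R$ is a genuine \MELL proof-structure with $\Pi \subseteq \Taylor{R}$, as required.

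The genuine difficulty lies not in this theorem --- each direction is a few lines once the supporting results are in hand --- but in the naturality theorem with $\eta$ it rests upon, and specifically in the redefinition of the contraction action on $\maltese$-cells. The whole non-atomic subtlety, illustrated by the counterexample opening \Cref{sec:non-atomic}, is that with non-atomic axioms $\op{\pnrewrite[\Contr]}$ is no longer total, so co-functionality (\Cref{prop:unwinding-co-functional}), and with it naturality, fails; the modified rule, which grafts explicit $\wn$-cells onto the daimon and thereby forces a $\wn$-cell on the $\MELLdaimon$ side, is engineered precisely to make $\op{\pnrewrite[\Contr]}$ total again. I therefore expect the step requiring the most care to be the verification that, with this modification and the enlarged $\FatTaylorEta{-}$, the $\Contr$-instance of the naturality square still commutes and the ``moreover'' clause still holds, while the remaining generators and the termination argument (\Cref{prop:unwinding-to-empty}) go through unchanged.
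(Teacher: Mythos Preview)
Your proposal is correct and mirrors the paper's own (implicit) argument: the paper gives no separate proof for this theorem but says ``we can prove similarly our main results'', meaning exactly the replay of the proof of \Cref{thm:characterization} with $\FatTaylorEta{-}$ and the $\eta$-naturality theorem in place of $\FatTaylor{-}$ and \Cref{thm:projection-natural}. Your closing diagnosis of where the genuine work lies (the $\Contr$ case of naturality and the totality of $\op{\pnrewrite[\Contr]}$) is also on point.
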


\section{Conclusions and perspectives}
\label{sec:conclusions}

\subparagraph{\texorpdfstring{\maltese}{daimon}-cells inside boxes}

Our glueability criterion (\Cref{thm:characterization}) solves the inverse Taylor expansion problem in a ``asymmetric'' way: we characterize the sets of \polyadic proof-structures that are included in the Taylor expansion of some \MELL proof-structure, but \polyadic proof-structures have no occurrences of $\maltese$-cells, while a \MELL proof-structure possibly contains $\maltese$-cells inside boxes (see \Cref{def:proof-structure}).
Not only this asymmetry is technically inevitable, but it reflects on the fact that some glueable set of \polyadic proof-structure might not contain any information on the content of some box (which is reified in \MELL by a \(\maltese\)-cell), or worse that, given the types, no content can fill that box.
Think of the \polyadic proof-structure $\rho$ made only of a $\oc$-cell with no inputs and one output of type $\oc X$, where $X$ is atomic: $\{\rho\}$ is glueable but 
the only \MELL proof-structure $R$ such that $\{\rho\} \subseteq \Taylor{R}$ is made of a box containing a $\maltese$-cell.

This asymmetry is also present in Pagani and Tasson's characterization \cite{PaganiTasson2009}, even if not particularly 
emphasized: their Theorem 2 (analogous to the left-to-right part of our \Cref{thm:characterization}) assumes not only that
the rewriting starting from a finite set of \polyadic proof-structures terminates 
but also that it ends on a \MELL proof-structure (without $\maltese$-cells, which ensures that there exists a \MELL proof-structure without $\maltese$-cells filling all the empty boxes).

\subparagraph{The \texorpdfstring{\(\lambda\)}{lambda}-calculus, connectedness and
  coherence}
Our rewriting system and glueability criterion might help to
prove that a binary coherence relation can solve the inverse Taylor expansion problem for \(\MELL\) proof-structures fulfilling some geometric property related to connectedness, despite the impossibility for the full \MELL fragment. 
Such a coherence would extend the coherence criterion for resource \(\lambda\)-terms. 
Note that our glueability criterion is
actually an extension of the criterion for resource \(\lambda\)-terms.
Indeed,
in the case of the \(\lambda\)-calculus, there are three rewrite steps,
corresponding to 
abstraction, application and variable (which can be
encoded in our rewrite steps), and coherence is defined inductively: if a set
of resource \(\lambda\)-terms is coherent, then any set of resource
\(\lambda\)-term that rewrites to it is also coherent.

Presented in this way, the main difference between the $\lambda$-calculus and ``connected'' $\MELL$ (concerning the inverse Taylor expansion problem) would not be because of the rewriting system but because the structure of any resource \(\lambda\)-term univocally determines the rewriting path, 
while, for \polyadic
proof-structures, we have to quantify existentially over all possible paths.
This is an unavoidable consequence of the fact that proof-structures
do not have a tree-structure, contrary to \(\lambda\)-terms and resource $\lambda$-terms.

Moreover, it is possible to match and mix different sequences of rewriting.
Indeed, consider three \polyadic proof-structures pairwise glueable. Proving
that they are glueable as a whole amounts to computing a rewriting path from the
rewriting paths witnessing the three glueabilities. Our paths were designed with
that mixing-and-matching operation in mind, in the particular case where the
boxes are connected. This is reminiscent of \cite{FSCD2016}, where we also
showed that a certain property enjoyed by the \(\lambda\)-calculus can be
extended to proof-structures, provided they are connected inside boxes. We leave
that work to a subsequent paper.

\subparagraph{Functoriality and naturality}

Our functorial point of view on proof-structures might unify many results. Let us
cite two of them:
\begin{itemize}
\item a sequent calculus proof of \(\vdash \Gamma\) can be translated into a
  path from the empty sequence into \(\Gamma\). 
  This could be the starting point for the formulation of a new correctness criterion;
\item the category \(\Path\) can be extended with higher structure, allowing to
  represent cut-elimination. The functors \(\qMELL\) and \(\PPolyPN\) can also
  be extended to such higher functors, proving 
	via naturality that cut-elimination and the Taylor expansion commute.
\end{itemize}



\subparagraph{Acknowledgments} This work has been partially funded by the EPSRC
grant EP/R029121/1 ``Typed Lambda-Calculi with Sharing and Unsharing'' and the ANR
project Rapido (ANR-14-CE35-0007).
 

\phantomsection
\bibliography{Biblio.bib}
\addcontentsline{toc}{section}{References}

\newpage
\appendix
\part*{Technical Appendix}
\addcontentsline{toc}{section}{Technical Appendix}
\section{Proof of naturality (\Cref{thm:projection-natural}, p.~\pageref{thm:projection-natural})}
\label{sec:naturality-proof}

  $\Projection$ is a family of morphisms of $\Rel$ indexed by the objects in
  $\Path$. In the first part of the statement (before ``Moreover''), the only thing to show is that the naturality squares
  \tikzsetnextfilename{images/naturality-projection}
  \begin{commutativediagram}{5}{5}{5}
    {
      \PPolyPN(\Gamma) \& \PPolyPN(\Gamma')\\
      \qMELL(\Gamma) \& \qMELL(\Gamma')\\
    };
    \path[-stealth]
    (m-1-1) edge [auto] node {$\PPolyPN(\xi)$} (m-1-2)
    (m-1-1) edge [auto] node {$\Projection_{\Gamma}$} (m-2-1)
    (m-2-1) edge [auto] node {$\qMELL(\xi)$} (m-2-2)
    (m-1-2) edge [auto] node {$\Projection_{\Gamma'}$} (m-2-2);
  \end{commutativediagram}
  \noindent commute for every path $\xi:\Gamma\to \Gamma'$, and actually, it is enough to
  show that such squares commute for all elementary paths.
  Let $a \colon \Gamma\to \Gamma'$ be an elementary path.

  \begin{enumerate}
  \item
    $\qMELL(a) \circ \Projection_{\Gamma} \subseteq \Projection_{\Gamma'} \circ
    \PPolyPN(a)$.

    Let $(\Pi,R') \in \qMELL(a) \circ \Projection_{\Gamma}$. Let $R\in
    \qMELL(\Gamma)$ be a witness of composition, that is an element such that
    $(\Pi,R) \in \Projection_{\Gamma}$ and $R\pnrewrite[a] R'$.

    Let $p \colon t\to \TreeT$ be a thick subforest of $\TreeT$, let $r_{1},\ldots,r_{n}$ be some roots of $\TreeT$, and let $\rho_{r_{1}\ldots r_{n}} \in \Pi$ be the element of the
    filled Taylor expansion of $R$ associated with $p$ and $r_{1},\ldots,r_{n}$.
    \begin{itemize}
    \item If $a = \Mix_{i}$, then, in $R$, the conclusions $1,\ldots,i,i+1,\dots k$ are exactly the conclusions of a root in the
      box-forest of $R$, and the connected components in $R$ of $i$ and $i+1$ are disjoint. By Definitions~\ref{def:taylor} and~\ref{def:FilledTaylor}, since $\rho_{r_{1}\ldots r_{n}} \in \Pi\subseteq\FatTaylor{R}$, we have that the conclusions $1,\ldots,i,i+1,\dots k$ are exactly the conclusions of a root $r$ in the box-forest of $\rho_{r_{1}\ldots r_{n}}$, and we have two possibilities:
      \begin{itemize}
      \item
      the connected components of $i$ and $i+1$ are disjoint in $\rho_{r_{1}\ldots r_{n}}$;
      \item
      $i$ and $i+1$ belong to the same connected component, in which case $r\in\{r_{1},\ldots,r_{n}\}$ and $\rho_{r_{1}\ldots r_{n}}$ is a $\maltese$-cell with conclusion $1,\ldots,i,i+1,\dots k$.      
      \end{itemize}
      In both cases  the rule $\Mix_{i}$ is also applicable in $\rho_{r_{1}\ldots r_{n}}$, yielding a \polyadic proof-structure $\rho'$. The box-forest $\TreeT'$ of $R'$ is obtained from the box-forest $\TreeT$ of $R$ by replacing a root $b$ by two roots $b_1$, $b_2$. Let
      $p':t'\to \TreeT'$ be such that all the boxes $d \neq b_1, b_2$ have the
      same inverse image than by $p$: $p'^{-1}(d) = p^{-1}(d)$, and,
      $p'^{-1}(b_1) = p^{-1}(b) \times \{1\}$, $p^{-1}(b_2) = p^{-1}(b) \times
      \{2\}$. We verify that $\rho'$ is the filled Taylor expansion of $R'$ through
      $p'$.
%
%
%
%
    \item If $a\in \{\Ax_{i}, \maltese_{i}, \One_i, \bot_i, \Weak_i\}$, let
      $k$ be such that the rule $a$ acts on the conclusions $i,\dots,i+k$ in
      $R$, and let $\ell$ be the type of the cell in $R$ connected to the
      conclusions $i,\dots,i+k$. In $\rho_{r_{1}\ldots r_{n}}$ there is a cell of type $\ell$ or
      $\maltese$ connected to the same conclusions. Clearly $a$ is applicable to $\rho_{r_{1}\ldots r_{n}}$, which yields a
      \polyadic proof-structure $\rho'$.

      The box-forest $\TreeT'$ of $R'$ is obtained from the box-forest $\TreeT$ of $R$ by erasing a root $b$. Let $p':t'\to \TreeT'$ be such that all
      the boxes $d \neq b$ have the same inverse image than by $p$:
      $p'^{-1}(d) = p^{-1}(d)$. We verify that $\rho'$ is the filled Taylor expansion of
      $R'$ through $p'$.
    \item If $a \in \{\otimes_i, \parr_i, \Der_i,\Contr_i\}$, let $k$ be such that the
      rule $a$ acts on the conclusions $i,\dots,i+k$ in $R$, and let $\ell$ be the
      type of the cell in $R$ connected to the conclusions $i,\dots,i+k$. In
      $\rho_{r_{1}\ldots r_{n}}$ there is a cell of type $\ell$ or $\maltese$ connected to the same
      conclusions. Clearly $a$ is applicable to $\rho_{r_{1}\ldots r_{n}}$, which yields a
      \polyadic proof-structure $\rho'$.

      $R'$ has the same box-forest $\TreeT$ as $R$. We verify that $\rho'$ is
      the expansion of $R'$ through $p$.
    \item If $a = \Cut^i$, let $c$ be the $\Cut$-cell in $R$ to which the rule
      is applied. The $\Cut$-cell $c$ has either one image in $\rho_{r_{1}\ldots r_{n}}$ or is represented by a
      $\maltese$ cell. In both cases, $\Cut^i$ is applicable to $\rho_{r_{1}\ldots r_{n}}$, yielding
      $\rho'$.

      $R'$ has the same box-forest $\TreeT$ as $R$. We verify that $\rho'$ is
      the expansion of $R'$ through $p$.
%
%
    \item If $a = \BoxR_{i}$, let $k$ be such that the rule $a$ acts on the
      conclusions $i,\dots,i+k$ in $R$. In $\rho_{r_{1}\ldots r_{n}}$ we have one of the following possibilities:
      \begin{itemize}
      \item
      $\rho_{r_{1}\ldots r_{n}}$ consists of a unique $\maltese$-cell with the same conclusions $i,\dots,i+k$;
      \item
      $\rho_{r_{1}\ldots r_{n}}$ consists of a $\oc$-cell in $i$ with no premises and $k$ $\wn$-cells with no premises above the other $k$ conclusions;
      \item
      there is a $\oc$-cell above the conclusion $i$ and a $\wn$-cell above each of the other $k$ conclusions; and the other cells of this root can
      be identified by their image $1,\dots,\ell$ in $t$: we have $\ell$ pairwise 
      disconnected sub-proof-structures $\pi_1,\dots, \pi_{\ell}$. 
           \end{itemize}
       In any case, the rule $\BoxR_i$ can be applied, yielding either a family
      $\rho'_1,\dots,\rho'_{\ell}$ of \polyadicdaimon proof-structures or a \polyadicdaimon proof-structure
      $\rho'_1$. More precisely, in the first (resp.\ second, third) case, we apply the Daimonded (resp.\ Empty, Non-empty) box rule (see 
      Figure~\ref{fig:actions-daimon}(g),~\ref{fig:actions-daimon}(h),~\ref{fig:actions-daimon}(i)).
      
      The box-forest $\TreeT'$ of $R'$ is obtained from the box-forest $\TreeT$ of $R$ by erasing the root of the conclusions $i,\dots,i+k$: the new root $b'$ of this tree of $\TreeT'$ is the unique vertex 
      connected to the root of $\TreeT$ (its unique son). We have $p^{-1}(b')=\{b'_{1},\ldots,b'_{\ell}\}$, and $\ell$ trees $t'_{1},\ldots,t'_{\ell}$, where $b'_{i}$ is the root of $t'_{i}$. The morphisms 
      $p'_{i}:t'_{i}\to \TreeT'$ are defined accordingly, and $\rho'_i$ is the filled Taylor expansion of $R$ through $p'_{i}$.
    \end{itemize}
  \item
    $\Projection_{\Gamma'} \circ \PPolyPN(a) \subseteq \qMELL(a) \circ
    \Projection_{\Gamma}$.
    
    Let $(\Pi,R') \in \Projection_{\Gamma'} \circ \PPolyPN(a)$. Let $\Pi'$ be
    a witness of composition, that is a set of $\PPolyPN(\Gamma')$ such that
    $(\Pi,\Pi') \in \PPolyPN(a)$ and $(\Pi',R') \in \Projection_{\Gamma'}$.

    We want to exhibit a $\MELLdaimon$ quasi-proof-structure $R$ such that
    $R \pnrewrite[a] R'$ and $\Pi$ is a part of the \filled Taylor expansion of
    $R$. By co-functionality of $\qMELL(a)$
    (\Cref{prop:unwinding-co-functional}), we have a candidate for such an $R$:
    the pre-image of $R'$ by this co-functional relation. We only have to check that $R'$
    is in the image of the co-functional relation and that $\Pi$ is a part of the \filled
    Taylor expansion of the pre-image $R$ of $R'$ by the co-functional relation. In other terms: if $a \colon \Gamma \to \Gamma'$
    and $R'\in\qMELL(\Gamma')$, then there exists $R\in\qMELL(\Gamma)$ such that 
    $R\pnrewrite[a] R'$ and $\Pi\subseteq \FatTaylor{R}$.    
    \begin{itemize}
    \item If $a\neq \Contr_i$, there exists (a unique) $R\in\qMELL(\Gamma)$ such that 
    $R\pnrewrite[a] R'$. The case $a=\Contr_i$ is a bit more delicate: in this case too there exists (a unique)
    $R\in\qMELL(\Gamma)$ such that $R\pnrewrite[a] R'$, but here we use the fact that the types of the axiom conclusions are atomic. Indeed,
    thanks to this choice every conclusion of $R'$ of type $\wn A$ is the conclusion of a $\wn$-cell (or of a $\maltese$-cell).
        \item Let $R$ be the unique pre-image of $R'$ through $\qMELL(a)$
      ($R = \op{\pnrewrite[a]} (R')$). We need to show that $\Pi$ is a part of
      the \filled Taylor expansion of $R$. Let $\rho \in \Pi$ and
      $\{\rho'_1, \dots, \rho'_n\} \subseteq \Pi'$ such that
      $\rho \pnrewrite[a] \{\rho'_1, \dots, \rho'_n\}$. In all cases except
      $\BoxR$, this set is a singleton $\{\rho'_1\}$. In that cases, let
      $p' : t' \to \TreeT'$ and $r'_1, \dots, r'_k$ be the conclusions of $R'$ such that $\rho'_1=\rho_{r'_{1}\ldots r'_{k}}$ is the
      element of the filled Taylor expansion of $R'$ associated with $p'$ and $r'_{1},\ldots,r'_{k}$.

      If $a \in \{\Ax_{i}, \maltese_{i}, \One_i, \bot_i, \Weak_i \}$, let $r$
      be the root of the conclusion $i$ in $R$ and let $s$ be the root of the
      conclusion $i$ in $\rho$. $\TreeT$ is the disjoint union of $\TreeT'$ and
      the root $r$. Let $t$ be the disjoint union of $t'$ and the root $s$, and
      $p:t \to \TreeT$ be defined as $p'$ over $t'$ and $p(s)=r$. If the cell rooted in
      $i$ in $\rho$ is a $\maltese$, then we check that $\rho$ is the
      element of the filled Taylor expansion of $R$ associated with $p$ and $r,r'_{1},\ldots,r'_{k}$,
      else associated with $p$ and $r'_{1},\ldots,r'_{k}$.

      If $a \in \{\Cut^i, \otimes_i, \parr_i, \Der_i, \Contr_i\}$, then
      $\TreeT = \TreeT'$, $p=p'$ and we check that $\rho$ is the 
       element of the filled Taylor expansion of $R$ associated with $p$ and $r'_{1},\ldots,r'_{k}$.

      If $a = \Mix_{i}$, let $r'_1$ and $r'_2$ be the respective roots of
      the conclusion $i$ and $i+k$ in $R'$, and let $r$ be the root of the conclusion $i$
      in $R$. Consider the roots $s'_1$ and $s'_2$ in $t'$ such that $i$ is
      produced from $s'_1$ and $i+k$ from $s'_2$, let $t$ be equal to $t'$
      except that the two roots $s'_1$ and $s'_2$ are merged and change $p'$ into $p$ accordingly.
      We check that $\rho$ is the
       element of the filled Taylor expansion of $R$ associated with $p$ and $r'_{1},\ldots,r'_{k}$.

      If $a = \BoxR_{i}$, we describe the case of the Non-empty box rule (Figure~\ref{fig:actions-daimon}(i)), leaving the two easier cases 
      of the Daimonded (resp.\ Empty) box rule of Figure~\ref{fig:actions-daimon}(g) (resp.\ Figure~\ref{fig:actions-daimon}(h)) to the reader.\\ 
       Let $p'_{1}:t'_1\to\TreeT',\dots,p'_{n}:t'_n\to\TreeT'$ be such that
      $\rho'_1, \dots, \rho'_n$ are the expansions of $R'$ associated with, respectively,
      $p'_{1},\dots,p'_{n}$. The forests $t'_{1},\dots,t'_{n}$ differ by only one tree. Consider the
      forest $t$ which has all the trees on which the $n$ forests do not differ and
      the union of the trees on which the forests differ, all connected with a root, and define $p$ accordingly. We check that $\rho$ is the
       element of the filled Taylor expansion of $R$ associated with $p$ and $r'_{1},\ldots,r'_{k}$.
  \end{itemize}               
\end{enumerate}

Concerning the second part of the statement of \Cref{thm:projection-natural} (after ``Moreover''), 
we prove the following stronger statement: given two sets $\Pi$ and $\Pi'$ of \polyadicdaimon quasi-proof-structures and a \MELLdaimon quasi-proof-structure $R'$,
\begin{enumerate}
	\item if $\Pi \pnrewrite[a] \Pi'$ and $\Pi' \subseteq \Taylor{R'}$, then $\Pi \subseteq \Taylor{R}$ where $R$ is such that $R \pnrewrite[a] R'$;
	\item if moreover $\Pi$ is a set of \polyadic proof-structures, then $R$ is a \MELL proof-structure.
\end{enumerate}
Both points are proven by straightforward inspection of the rewrite rules defined in Figures \ref{fig:actions-all} and \ref{fig:actions-daimon}. 
The idea is that none of them, read from right to left, introduces a new $\maltese$-cell, thus from $\Pi' \subseteq \Taylor{R'}$ it follows that $\Pi \subseteq \Taylor{R}$;
whereas the only rewrite rule, read from left to right, that introduces a new $\maltese$-cell is the ``empty box'' one (\Cref{fig:empty-box}), so if $\Pi \pnrewrite[\BoxR_{i}] \Pi'$ according to that and $\Pi$ is a set of \polyadic proof-structures (in particular, no $\maltese$-cell occurs in any element of $\Pi$), then in $R$ the only occurrence of a $\maltese$-cell is necessarily the whole content of a box, hence $R$ is a \MELL quasi-proof-structure. 
Finally, $R$ is a \MELL proof-structure (without ``quasi'') because $\Pi \subseteq \Taylor{R}$ and the Taylor expansion preserves conclusions (\Cref{rmk:conclusions}).
\qed
\section{The general case: non-atomic axioms (\Cref{sec:non-atomic})}
\label{app:general}

We present here (\Cref{fig:eta-actions-daimon}) the rewrite rules for the general case of \polyadicdaimon quasi-proof-structures with possibly \emph{non-atomic axioms}. 
They replace the rewrite rules in \Cref{fig:actions-daimon} (which are valid only for \polyadicdaimon quasi-proof-structures with atomic axioms).
The rewrite rules in \Cref{fig:eta-actions-daimon} are essentially the same
as in \Cref{fig:actions-daimon},\footnotemark
\footnotetext{Up to the fact that here an output of a $\maltese$-cell is either a conclusion of the \polyadicdaimon quasi-proof-structure $\rho$ or the input of a $\wn$-cell whose output is a conclusion of $\rho$, see below.} 
except for the rewrite rule \(\Contr_i\) (\Cref{fig:daimon-tensor}), which is changed to \Cref{fig:eta-daimon-contraction}.

When representing a \polyadicdaimon quasi-proof-structure $\rho$, we write
\parbox[b][\baselineskip]{2cm}{\(
\tikzsetnextfilename{images/dai-wn}
\scalebox{\magicnumber}{
  \pnet{
    \pnformulae{
      \pnf[i]{$\MMaybe{i_1}$}~\pnf[l]{$\ldots$}~\pnf[j]{$\MMaybe{i_k}$}~\pnf[it]{$i$}
    }
    \pndaimon[dai]{i,j,it}
  }}
\)}
for a $\maltese$-cell whose outputs $i_1, \dots, i_k$ are either conclusions (as
$i$) of $\rho$, or inputs of $\wn$-cells whose outputs are conclusions of
$\rho$.

\begin{figure}[t]
  \centering
  \begin{subfigure}[c]{0.3\textwidth}
    \begin{align*}
      \tikzsetnextfilename{images/mix-daimon-rewrite-1}
      \scalebox{\magicnumber}{
      \pnet{
      \pnformulae{
      \pnf[i]{$\MMaybe{\Gamma_{k}}$}
      ~\pnf[l]{$\MMaybe{i}$}~\pnf[l1]{$\MMaybe{i\!+\!1}$}~~\pnf[j]{$\ \MMaybe{\Gamma_{k}'}  $}
      }
      \pndaimon[d]{i,l,l1,j}
      \pnsemibox{d,i,l,l1,j}
      }} \pnrewrite[\Mix_{i}]
      \tikzsetnextfilename{images/mix-daimon-rewrite-2}
      \Bigg\{
      \scalebox{\magicnumber}{
      \pnet{
      \pnformulae{
      \pnf[i]{$\MMaybe{\Gamma_{k}}$}~~\pnf[ik]{$\MMaybe{i}\ $}
      ~~\pnf[j1]{$\MMaybe{i\!+\!1}$}~
      ~\pnf[j]{$\MMaybe{\Gamma_{k}'}$}
      }
      \pndaimon[d1]{i,ik}
      \pndaimon[d2]{j1,j}
      \pnsemibox{d1,i,ik}
      \pnsemibox{d2,j1,j}
      }}
    \Bigg\}
    \end{align*}
    \vspace{-\baselineskip}
    \caption{Mix}
    \label{fig:eta-mix-daimon}
  \end{subfigure}%
  \begin{subfigure}[c]{0.3\textwidth}
    \begin{align*}
      \tikzsetnextfilename{images/daimon-ax-rewrite-1}
      \scalebox{\magicnumber}{
      \pnet{
      \pnformulae{
      \pnf{$\dots\,$}~\pnf[i]{$i$}~\pnf[j]{$i\!+\!1$}~\pnf{$\,\dots$}
      }
      \pndaimon[ax]{i,j}
      \pnsemibox{ax,i,j}
      }} \pnrewrite[\Ax_{i}]
      \tikzsetnextfilename{images/daimon-ax-rewrite-2}
      \Bigg\{
      \scalebox{\magicnumber}{
      \pnet{
      \pnformulae{
      \pnf{$\dots$}~\pnf{$\dots$}
      }
      }}
    \Bigg\}
    \end{align*}
    \vspace{-\baselineskip}
    \caption{Hypothesis (\(\Ax, \maltese, \One, \bot, \Weak\))}
    \label{fig:eta-ax-daimon}
  \end{subfigure}

  \begin{subfigure}[c]{0.3\textwidth}
    \begin{align*}
      \tikzsetnextfilename{images/daimon-cut-rewrite-1}
      \scalebox{\magicnumber}{
      \pnet{
      \pnformulae{
      \pnf[i]{$\MMaybe{\Gamma_k}$}~\pnf[l]{ }
      }
      \pndaimon[d]{i}[1][.5]
      \pnsemibox{d,i}
      }} \pnrewrite[\Cut^i]
      \tikzsetnextfilename{images/daimon-cut-rewrite-2}
      \Bigg\{
      \scalebox{\magicnumber}{
      \pnet{
      \pnformulae{
      \pnf[i]{$\MMaybe{\Gamma_{k}}$}~\pnf[n]{$i$}~\pnf[n']{$i\!+\!1$}~
      }
      \pndaimon[d]{i,n,n'}
      \pnsemibox{d,i,n,n'}
      }}
    \Bigg\}
    \end{align*}
    \vspace{-\baselineskip}
    \caption{Cut}
    \label{fig:eta-cut-daimon}
  \end{subfigure}%
  \begin{subfigure}[c]{0.3\textwidth}
    \begin{align*}
      \tikzsetnextfilename{images/daimon-tensor-rewrite-1}
      \scalebox{\magicnumber}{
      \pnet{
      \pnformulae{
      \pnf[i]{$\MMaybe{\Gamma_k}$}~~\pnf[l]{$i$}
      }
      \pndaimon[d]{i,l}[1][.2]
      \pnsemibox{d,i}
      }} \pnrewrite[\otimes_i]
      \tikzsetnextfilename{images/daimon-tensor-rewrite-2}
      \Bigg\{
      \scalebox{\magicnumber}{
      \pnet{
      \pnformulae{
      \pnf[i]{$\MMaybe{\Gamma_{k}}$}~\pnf[n]{$i$}~\pnf[n']{$i\!+\!1$}~
      }
      \pndaimon[d]{i,n,n'}
      \pnsemibox{d,i,n,n'}
      }}
    	\Bigg\}
    \end{align*}
    \vspace{-\baselineskip}
    \caption{Binary rule (\(\otimes, \parr\))}
    \label{fig:eta-daimon-tensor}
  \end{subfigure}%
  \begin{subfigure}[c]{0.2\textwidth}
    \begin{align*}
      \tikzsetnextfilename{images/daimon-dereliction-rewrite-1}
      \scalebox{\magicnumber}{
      \pnet{
      \pnformulae{
      \pnf[i]{\small $\MMaybe{\Gamma_k}$}~~\pnf[l]{$i$}
      }
      \pndaimon[d]{i,l}
      \pnsemibox{d,i,l}
      }} \pnrewrite[\Der_i]
      \tikzsetnextfilename{images/daimon-dereliction-rewrite-2}
      \Bigg\{
      \scalebox{\magicnumber}{
      \pnet{
      \pnformulae{
      \pnf[i]{\small $\MMaybe{\Gamma_{k}}$}~~\pnf[n']{\small $i$}~
      }
      \pndaimon[d]{i,n'}
      \pnsemibox{d,i,n'}
      }}
    \Bigg\}
    \end{align*}
    \vspace{-\baselineskip}
    \caption{Dereliction}
    \label{fig:eta-daimon-dereliction}
  \end{subfigure}

  \begin{subfigure}[b]{0.2\textwidth}
    \tikzsetnextfilename{images/daimon-box-rewrite-1}
    \begin{align*}
      \scalebox{\magicnumber}{
      \pnet{
      \pnformulae{
      \pnf[i1]{$\MMaybe{\Gamma}$}~\pnf[i]{$i$}
      }
      \pndaimon[dai]{i1,i}
      \pnsemibox{dai,i1,i}}
      } \pnrewrite[\BoxR_i]
      \tikzsetnextfilename{images/daimon-box-rewrite-1}
      \Bigg\{
      \scalebox{\magicnumber}{
      \pnet{
      \pnformulae{
      \pnf[i1]{$\MMaybe{\Gamma}$}~\pnf[i]{$i$}
      }
      \pndaimon[dai]{i1,i}
      \pnsemibox{dai,i1,i}}
      }
    	\Bigg\}
    \end{align*}
    \vspace{-\baselineskip}
    \caption{Daimoned box}
    \label{fig:eta-daimon-box}
  \end{subfigure}%
  \begin{subfigure}[b]{0.2\textwidth}
    \begin{align*}
      \tikzsetnextfilename{images/daimon-contraction-rewrite-1}
      \scalebox{\magicnumber}{
      \pnet{
      \pnformulae{
      \pnf[i]{$\MMaybe{\Gamma_k}$}~~\pnf[l]{$i$}
      }
      \pndaimon[d]{i,l}[1][.2]
      \pnsemibox{d,i}
      }} \pnrewrite[\Contr_i]
      \tikzsetnextfilename{images/daimon-contraction-rewrite-2}
      \Bigg\{
      \scalebox{\magicnumber}{
      \pnet{
      \pnformulae{
      \pnf[i]{$\MMaybe{\Gamma_{k}}$}~\pnf[n]{\quad}~\pnf[n']{\quad}~
      }
      \pndaimon[d]{i,n,n'}
      \pnexp{}{n}[?n]{$i$}
      \pnexp{}{n'}[?n']{$i\!+\!1$}
      \pnsemibox{d,i,n,n',?n,?n'}
      }}
      \Bigg\}
    \end{align*}
    \vspace{-\baselineskip}
    \caption{Contraction}
    \label{fig:eta-daimon-contraction}
  \end{subfigure}%
  \begin{subfigure}[b]{0.2\textwidth}
    \tikzsetnextfilename{images/empty-box-rewrite-1}
    \begin{align*}
      \scalebox{\magicnumber}{
      \pnet{
      \pnformulae{
      \pnf[Gamma2]{$\wn\Gamma$}~
      \pnf[i]{$i$}
      }
      \pnwn[?2]{Gamma2}
      \pncown[!i]{i}
      \pnsemibox{Gamma2,?2,!i}
      }}
      \pnrewrite[\BoxR_i]
      \tikzsetnextfilename{images/empty-box-rewrite-2}
      \Bigg\{
      \scalebox{\magicnumber}{
      \pnet{
      \pnformulae{
      \pnf[Gamma2]{$\wn\Gamma$}~
      \pnf[i]{$i$}
      }
      \pndaimon[dai]{i,Gamma2}
      \pnsemibox{dai,i,Gamma2}
      }
      }
    	\Bigg\}
    \end{align*}
    \vspace{-\baselineskip}
    \caption{Empty box}
    \label{fig:eta-empty-box}   
  \end{subfigure}

  \begin{subfigure}[b]{0.3\textwidth}
    \begin{align*}
      \tikzsetnextfilename{images/oc-split-1}
      \scalebox{\magicnumber}{
      \pnet{
      \pnformulae{
      \pnf{$\dots$}}
      \pnsomenet[pin]{$\rho_n$}{1.5cm}{0.5cm}[at (0,1.5)]
      \pnsomenet[pi1]{$\rho_1$}{1.5cm}{0.5cm}[at (0,0)]
      \pnoutfrom{pi1.-45}[pi1c]{$\quad$}
      \pnoutfrom{pin.-30}[pinc]{$\quad$}
      \pnbag[!]{}{pi1c,pinc}{$i$}[1][0.5]
      \pnoutfrom{pi1.-135}[pi1a1]{$\quad$}
      \pnoutfrom{pin.-150}[pina2]{$\quad$}
      \pnexp[?2]{}{pi1a1,pina2}{$\wn\Gamma_{k}$}[1][-0.5]
      \pnsemibox{pi1,pin,!,?2}
      }}
      \pnrewrite[\BoxR_i]
      \tikzsetnextfilename{images/oc-split-2}
      \left\{
      \scalebox{\magicnumber}{
      \pnet{
      \pnformulae{
      \pnf{$\dots$}}
      \pnsomenet[pi1]{$\rho_j$}{1.5cm}{0.5cm}[at (0,0)]
      \pnoutfrom{pi1.-30}[pi1c]{$i$}[2]
      \pnoutfrom{pi1.-150}[pi1a1]{$\quad$}
      \pnexp[?2]{}{pi1a1}{$\wn \Gamma_{k}$}[1][-0.5]
      \pnsemibox{pi1,?2}
      }}\right\}_{1 \leqslant j \leqslant n}
    \end{align*}
    \vspace{-\baselineskip}
    \caption{Non-empty box ($n > 0$)}
    \label{fig:eta-polybox}
  \end{subfigure}
  \caption{Actions of elementary paths on $\maltese$-cells and on a box in $\PolyPN$.}
  \label{fig:eta-actions-daimon}
\end{figure}


\end{document}